\newcommand{\spl}{\mathrel{L}}
\newcommand{\shortcite}[1]{\cite{#1}}
\newcommand{\naturals}{\mathbb{N}}
\newcommand{\integers}{\mathbb{Z}}
\newcommand{\fixenumerate}{\addtolength{\itemsep}{-4pt}}
\newcommand{\calC}{{{\mathcal{C}}}}
\newcommand{\calD}{{{\mathcal{D}}}}
\newcommand{\Pclass}{\textnormal{P}}
\newcommand{\FPclass}{\textnormal{FP}}
\newcommand{\NPclass}{\textnormal{NP}}
\newcommand{\sharpPclass}{\textnormal{\#P}}
\newcommand{\np}{\NPclass}
\newcommand{\sharpp}{\sharpPclass}
\newcommand{\fp}{\FPclass}
\newcommand{\p}{\Pclass}
\newcommand{\calS}{{{\mathcal{S}}}}
\newcommand{\score}[1]{\id{score}_{#1}}
\newcommand{\N}[1]{N_{#1}}
\newcommand{\revnot}[1]{\overleftarrow{#1}}
\theoremstyle{plain}
\newtheorem{theorem}{Theorem}[section]
\newtheorem{corollary}[theorem]{Corollary}
\newtheorem{definition}[theorem]{Definition}
\newtheorem{lemma}[theorem]{Lemma}
\newcommand{\scores}{{\mathit{scores}}}
\newcommand{\approval}{{\mathrm{approval}}}
\newcommand{\cutoff}[2]{{\mathrm{cutoff}(#1,#2)}}
\newcommand{\best}{{\mathrm{top}}}
\newcommand{\rank}{{\mathrm{rank}}}
\title{Possible Winners in Noisy Elections\footnote{An early version
    of this paper was presented at the Twenty-Sixth AAAI Conference on
    Artificial Intelligence (AAAI-12), July 22--26, 2012, in Toronto,
    Canada and at IJCAI Workshop on Social Choice and Artificial
    Intelligence, July 16, 2011, in Barcelona.}}  \author{
  Krzysztof Wojtas\quad Krzysztof Magiera\quad Tomasz Mi\k{a}sko\quad Piotr Faliszewski \\
  AGH University of Science and Technology\\
  Krakow, Poland }
\begin{document}

\maketitle

\begin{abstract}
  We consider the problem of predicting winners in elections, for the
  case where we are given complete knowledge about all possible
  candidates, all possible voters (together with their preferences),
  but where it is uncertain either which candidates exactly register
  for the election or which voters cast their votes. Under reasonable
  assumptions, our problems reduce to counting variants of election
  control problems.  We either give polynomial-time algorithms or
  prove $\sharpp$-completeness results for counting variants of
  control by adding/deleting candidates/voters for Plurality,
  $k$-Approval, Approval, Condorcet, and Maximin voting rules.  We
  consider both the general case, where voters' preferences are
  unrestricted, and the case where voters' preferences are
  single-peaked.
\end{abstract}

\section{Introduction}

Predicting election winners is always an exciting activity: Who will
be the new president? Will the company merge with another one? Will
taxes be higher or lower? The goal of this paper is to establish the
computational complexity of a family of problems modeling a certain
type of winner-prediction problems.

Naturally, predicting winners is a hard task, full of
uncertainties. For example, we typically are not sure which voters
will eventually cast their votes or, sometimes, even which candidates
will participate in the election (consider, e.g., a candidate
withdrawing due to personal reasons).  Further, typically we do not
have complete knowledge regarding each voters' preferences.
Nonetheless, elections are in everyday use both among humans
(consider, e.g., political elections, elections among companies'
shareholders, various polls on the Internet and social
media, or even sport events, where judges ``vote'' on who is the best
competitor) and among software agents (see, e.g., election
applications for planning in multiagent
systems~\cite{eph-ros:j:multiagent-planning}, for recommendation
systems~\cite{gho-mun-her-sen:c:voting-for-movies,bou-lu:c:chamberlin-courant,sko-fal-sli:c:multiwinner},
for web-search~\cite{dwo-kum-nao-siv:c:rank-aggregation}, or natural
language processing~\cite{kut-kit:j:nlp}) and,
thus, the problem of predicting election winners is far too important
to be abandoned simply because it is difficult.

In this paper, we focus on a variant of the winner-prediction problem
where we have complete knowledge regarding all possible candidates and
all eligible voters (including knowledge of their
preferences\footnote{Note that while full-knowledge assumption
  regarding voters' preferences might seem very unrealistic, it is
  standard within computational social choice literature, and in our
  case can often be justified (e.g., election polls can provide a good
  approximation of such knowledge).}), but we are uncertain as to
which candidates and which voters turn up for the actual election (see
Section~\ref{sec:related} for other approaches to the
problem). However, modelling uncertainty regarding both the candidate
set and the voter collection on one hand almost immediately leads to
computationally hard problems for typical election systems, and on the
other hand does not seem to be as well motivated as focusing on each
of these sets separately. Thus, we consider the following two
settings:
\begin{enumerate}
\fixenumerate
\item The set of candidates is fixed, but for each possible subset
  of voters we are given a probability that exactly these voters
  show up for the vote.
\item The set of voters is fixed, but for each possible subset of
  candidates we are given a probability that exactly these
  candidates register for the election.
\end{enumerate}
The former setting, in particular, corresponds to political elections
(e.g., to presidential elections), where the candidate set is
typically fixed well in advance due to election rules, the set of all
possible voters (i.e., the set of all citizens eligible to vote) is
known, but it is not clear as to which citizens choose to cast their
votes.  The latter setting may occur, for example, if one considers
software agents voting on a joint plan~\cite{eph-ros:j:multiagent-planning}. The set of agents participating in the
election is typically fixed, but various variants of the plan can be
put forward or dismissed dynamically.  In either case, our goal is to
compute each candidate's probability of victory.
However, our task would very quickly become computationally
prohibitive (or, difficult to represent on a computer) if we allowed
arbitrary probability distributions. Thus, we have to choose some
restriction on the distributions we consider. Let us consider the
following example.  

Let the set $C$ of candidates participating in the election be fixed
(e.g., because the election rules force all candidates to register
well in advance). We know that some set $V$ of voters will certainly
vote (e.g., because they have already voted and this information is
public\footnote{Naturally, in typical political elections such
  information would not be public and we would have to rely on
  polls. On the other hand, in multiagent systems there can be cases
  where votes are public.}). The set of voters who have not decided to
vote yet is $W$. From some source (e.g., from prior experience) we
have a probability distribution $P$ on the number of voters from $W$
that will participate in the election (we assume that each equal-sized
subset from $W$ is equally likely to joint the election; we have no
prior knowledge as to which eligible voters are more likely to vote).
That is, for each $i$, $0 \leq i \leq \|W\|$, by $P(i)$ we denote the
probability that exactly $i$ voters from $W$ join the election (we
assume that $P(i)$ is easily computable). By $Q(i)$ we denote the
probability that a certain designated candidate $p$ wins provided that
exactly $i$, randomly chosen, voters from $W$ participate in the
election.  The probability that $p$ wins is given by:
\[
P(0)Q(0) + P(1)Q(1) + \cdots P(\|W\|)Q(\|W\|).
\]
We use this formula to compute the probability of each candidate's
victory, which gives some idea as to who is the likely winner of the
election. 

To compute $Q(i)$, we have to compute for how many subsets $W'$ of $W$
of size exactly $i$ candidate $p$ wins after adding the voters from
$W'$ to the election, and divide it by $\|W\| \choose i$.  That is,
computing $Q(i)$ boils down to solving a counting variant of control
by adding voters problem.  In the decision variant of this control
problem, introduced by Bartholdi, Tovey, and
Trick~\cite{bar-tov-tri:j:control}, we are given an election where
some voters have already registered to vote, some are
not-yet-registered, and we ask if it is possible to add some number of
these not-yet-registered voters (but no more than a given limit) to
ensure that a designated candidate wins. In the counting variant,
studied first in this paper, we ask how many ways are there to add
such a group of voters.

One can do reasoning analogous to the one presented above for the case
of control by deleting voters and for the case of control by
adding/deleting candidates. That is, our winner prediction problems,
in essence, reduce to solving counting variants of election control
problems. Our goal in this paper is, thus, to study the computational
complexity of these counting control problems.

Computational study of election control was initiated by Bartholdi,
Tovey, and Trick~\shortcite{bar-tov-tri:j:control}, and in recent
years was continued by a number of researchers (we discuss related
work in Section~\ref{sec:related}; we also point the reader to the
survey of Faliszewski, Hemaspaandra, and
Hemaspaandra~\cite{fal-hem-hem:j:cacm-survey} for a more detailed
discussion of the complexity of election control). However, our paper
is the first one to study counting variants of control (though, as we
discuss in Section~\ref{sec:related}, the papers of Walsh and
Xia~\cite{wal-xia:c:lot-based} and of Bachrach, Betzler, and
Faliszewski~\cite{bac-bet-fal:c:counting-pos-win} are very close in
spirit).  

It is well-known that election control problems tend to have fairly
high worst-case complexity. Indeed, we are not aware of a single
practical voting rule for which the decision variants of the four most
typical election control problems (the problems of adding/deleting
candidates/voters) are all polynomial-time solvable. Naturally, when a
decision variant of a counting problem is $\np$-hard, then we cannot
hope that its counting variant would be polynomial-time solvable. In
effect, from the technical perspective, our research can be seen as
answering the following question: For which voting rules and for which
control types is the counting variant of the control problem no harder
than its decision variant? From the practical perspective, it is,
nonetheless, more important to simply have effective algorithms. To
this end, we follow Faliszewski et
al.~\cite{fal-hem-hem-rot:j:single-peaked-preferences} and in addition
to the general setting where each voter can cast any possible vote, we
study the single-peaked case, where the voting is restricted to a
certain class of votes viewed as ``reasonable'' (intuitively, in the
single-peaked case we assume that the candidates are located on a
one-dimensional spectrum of possible opinions, each voter has some
favorite opinion on this spectrum, and the voters form their
preferences based on candidates' distances from their favorite
position). Single-peaked preferences, introduced by
Black~\cite{bla:b:polsci:committees-elections} over 50 years ago, are
often viewed as a natural (if somewhat simplified due to its
one-dimensional nature) model of how realistic votes might look like
(however, whenever one makes a statement of this form, one should keep
in mind the results of Sui, Francois-Nienaber, and
Boutilier~\cite{sui-fra-bou:c:single-peaked}). It turns out that for
the sinlge-peaked case, we obtain polynomial-time algorithms for
counting variants of all our control problems and all the rules that
we consider (except for the Approval rule, for which we have no
results in this case).

We mention that our model of winner prediction, based on counting
variants of control, is similar to, though more general than, the
model where we assume that each voter casts his or her vote with some
probability $p$ (universal for all the voters).  We can simulate this
scenario in our model by providing an appropriate function $P$.
Specifically, if there are $n$ voters and the probability that each
particular voter $v$ casts his or her vote is $p$, then the
probability that exactly $i$ voters vote is given by the binomial
distribution:
\[
   P(i) = {n \choose i}p^i(1-p)^{n-i}.
\]
On the other hand, our model does not capture the situation where each
voter $v$ has a possibly different probability $p_v$ of casting a
vote. We believe that this latter model deserves study as well, but we
do not focus on it in this paper.

The paper is organized as follows. First, in Section~\ref{sec:prelim},
we formally define elections and provide brief background on
complexity theory (focusing on counting problems). Then, in
Section~\ref{sec:problems}, we define counting variants of election
control problems and prove some of their general properties.  Our main
results are in Section~\ref{sec:results}. There we study the
complexity of Plurality, $k$-Approval, Appoval, Condorcet rule, and
Maximin rule. For each of the rules, we consider the unrestricted case
and, if we obtain a hardness result, we consider the single-peaked
case (except for Approval, for which we were not able to obtain
results for the single-peaked case).  Finally, we discuss related work
in Section~\ref{sec:related} and bring together our conclusions and
discuss future work in Section~\ref{sec:conclusions}.

\section{Preliminaries}
\label{sec:prelim}

\noindent\textbf{Elections.}\quad
An \emph{election} $E$ is a pair $(C,V)$ such that $C$ is a finite set
of candidates and $V$ is a finite collection of voters.  We typically
use $m$ to denote the number of candidates and $n$ to denote the
number of voters.  Each voter has a preference order in which he or
she ranks candidates, from the most desirable one to the most despised
one.  For example, if $C = \{a,b,c\}$ and a voter likes $b$ most and
$a$ least, then this voter would have preference order $b > c >
a$. (However, under Approval voting, instead of ranking the candidates
each voter simply indicates which candidates he or she approves of.)
We sometimes use the following notation.  Let $A$ be some subset of
the candidate set. Putting $A$ in a preference order means listing
members of $A$ in lexicographic order and putting $\overleftarrow{A}$
in a preference order means listing members of $A$ in the reverse of
the lexicographic order. For example, if $C = \{a,b,c,d\}$ then $a > C
- \{a,b\} > d$ means $a > c > d > b$ and $a > \overleftarrow{C -
  \{a,b\}} > d$ means $a > d > c > b$.
Given an election $E = (C,V)$, we write $\N{E}(c,c')$ to denote the
number of voters in $V$ that prefer $c$ to $c'$. The function $N_E$ is
sometimes referred to as the \emph{weighted majority graph} of
election $E$.

In general, given a candidate set $C$, the voters are free to report
any preference order over $C$ (we refer to this setting as the
\emph{unrestricted case}). However, it is often more realistic to
assume some restriction on the domain of possible votes (for example,
in real-life political elections we do not expect to see many voters
who rank the extreme left-wing candidate first, then the extreme
right-wing candidate, and so on, in an interleaving fashion).  Thus,
in addition to the unrestricted case, we also study the case of
\emph{single-peaeked} preferences of
Black~\cite{bla:b:polsci:committees-elections} (this domain
restriction is quite popular in the computational social choice
literature, and was already studied, e.g., by
Conitzer~\cite{con:j:eliciting-singlepeaked},
Walsh~\cite{wal:c:uncertainty-in-preference-elicitation-aggregation},
Faliszewski et al.~\cite{fal-hem-hem-rot:j:single-peaked-preferences},
and others; see Section~\ref{sec:related}).

\begin{definition}\label{def:sp}
  Let $C$ be a set of candidates and let $L$ be a linear order over
  $C$ (we refer to $L$ as the societal axis). We say that a preference
  order $>$ (over $C$) is single-peaked with respect to $L$ if for
  each three candidate $a, b, c \in C$, it holds that
  \[ 
    (a \spl b \spl c \lor  c \spl b \spl a) \implies (a > b \implies b > c)
  \]
  An election $E = (C,V)$ is single-peaked with respect to $L$ if each
  vote in $V$ is single-peaked with respect to $L$. An election $E =
  (C,V)$ is single-peaked if there is a societal axis $L$ such that
  $E$ is single-peaked with respect to $L$.
\end{definition}

There are polynomial-time algorithms that given an election $E$ verify
if it is single-peaked and, if so, provide the appropriate societal
axis witnessing this
fact~\cite{bar-tri:j:stable-matching-from-psychological-model,esc-lan-ozt:c:single-peaked-consistency,bal-har:j:characterization-single-peaked}. Thus,
following
Walsh~\cite{wal:c:uncertainty-in-preference-elicitation-aggregation},
whenever we consider problems about single-peaked elections, we assume
that we are given appropriate societal axis as part of the input (if
it were not provided, we could always compute it.)\bigskip

\noindent\textbf{Voting Systems.}\quad
A \emph{voting system} (a \emph{voting rule}) is a rule which
specifies how election winners are determined. We allow an election to
have more than one winner, or even to not have winners at all. In
practice, tie-breaking rules are used, but here we disregard this
issue by simply using the unique winner model (see
Definition~\ref{def:control}). However, we point the reader
to~\cite{obr-elk-haz:c:ties-matter,obr-elk:c:random-ties-matter,obr-zic-elk:c:multiwinner-tie-breaking}
for a discussion regarding the influence of tie-breaking for the case
of election manipulation problems
(see~\cite{fal-hem-hem:j:cacm-survey,fal-pro:j:manipulation,bra-con-end:b:comsoc} for
overviews of the manipulation problem specifically and computational aspects
of voting generally).

We consider the following voting rules (in the description below we 
assume that $E = (C,V)$ is an election with $m$ candidates and $n$
voters):
\begin{description}

\item[Plurality.] Under Plurality, each candidate receives a point for
  each vote where this candidate is ranked first. The candidates with
  most points win.

\item[$\bm{k}$-Approval.] For each candidate $c \in C$, we define
  $c$'s $k$-Approval score, $\score{E}^k(c)$, to be the number of
  voters in $V$ that rank $c$ among the top $k$ candidates; the
  candidates with highest scores win. Note that Plurality is, in
  effect, a nickname for $1$-Approval. (We mention that $k$-Veto means
  $(m-k)$-Approval, though we do not study $k$-Veto in this paper).

\item[Approval.] Under Approval (without the qualifying ``$k$-''), the
  score of a candidate $c \in C$, $\score{E}^a(c)$, is the number of
  voters that approve of $c$ (recall that under Approval the voters do
  not cast preference orders but 0/1 approval vectors, where for each
  candidate they indicate if they approve of this candidate or
  not). Again, the candidates with highest scores win. (Note that the
  notion of single-peaked elections that we have provided as
  Definition~\ref{def:sp} does not apply to preferences specified as
  approval vectors; Faliszewski et
  al.~\cite{fal-hem-hem-rot:j:single-peaked-preferences} have provided
  a natural variant of single-peakedness for approval vectors, but
  since we did not obtain results for this case, we omitted this
  definition.)

\item[Condorcet.] A candidate $c$ is a \emph{Condorcet winner} exactly
  if $\N{E}(c,c') > \N{E}(c',c)$ for each $c' \in C - \{c\}$.  A
  candidate $c$ is a \emph{weak Condorcet winner} exactly if
  $\N{E}(c,c') \geq \N{E}(c',c)$ for each $c' \in C - \{c\}$. Note
  that if an election has a Condorcet winner, then this winner is
  unique (though, an election may have several weak Condorcet
  winners).\footnote{We mention that sometimes Condorcet's rule is not
    considered a voting rule (but, for example, a consensus
    notion~\cite{mes-nur:b:distance-realizability}) because there are
    elections for which there are no Condorcet winners. However,
    Condorcet's rule has been traditionally studied in the context of
    election control, and--thus---we study it to (a) enable comparison
    with other papers, and (b) because for single-peaked elections the
    results for Condorcet rule translate to all Condorcet-consistent
    rules.}  We recall the classic result that if the voters are
  single-peaked then the election always has (weak) Condorcet
  winner(s) (if the number of voters is odd, then the election has a
  unique Condorcet winner).

\item[Maximin.] Maximin rule is defined as follows. The score of a
  candidate $c$ is defined as $\score{E}^m(c) = \min_{d \in C
    -\{c\}}N_E(c,d)$. The candidates with highest Maximin score are
  Maximin winners.
\end{description}

Maximin is an example of a so-called Condorcet-consistent rule.  A
rule $R$ is Condorcet-consistent if the following holds: If $E$ is an
election with (weak) Condorcet winner(s), then the $R$-winners of $E$
are exactly these (weak) Condorcet winner(s).  We also considered
studying the Copeland rule (see, e.g., the work of Faliszewski et
al.~\cite{fal-hem-hem-rot:j:llull} regarding the complexity of control
for the Copeland rule), but for this rule all relevant types of
control are $\np$-complete and, so, at best we would obtain
$\sharpp$-completeness results based on simple generalizations of
proofs from the literature. Nonetheless, since Copeland's rule is
Condorcet-consistent, results for the single-peaked case are valid for
it.

\bigskip

\noindent\textbf{Notation for Graphs.}\quad We assume familiarity with basic
concepts of graph theory. Given an undirected graph $G$, by $V(G)$ we
mean its set of vertices, and by $E(G)$ we mean its set of
edges.\footnote{We use this slightly nonstandard notation to be able
  to also use symbols $V$ and $E$ to denote voter collections and
  elections.}  Whenever we discuss a bipartite graph $G$, we assume
that $V(G)$ is partitioned into two subsets, $X$ and $Y$, such that
each edge connects some vertex from $X$ with some vertex from $Y$. We
write $X(G)$ to denote $X$, and $Y(G)$ to denote $Y$.\bigskip

\noindent\textbf{Computational Complexity.}\quad
We assume that the reader is familiar with standard notions of
computational complexity theory, as presented, for example, in the
textbook of Papadimitriou~\cite{pap:b:complexity}, but we
briefly review notions regarding the complexity theory of counting
problems. Let $A$ be some computational problem where for each
instance $I$ we ask if there exists some mathematical object
satisfying a given condition. In the counting variant of $A$, denoted
$\#A$, for each instance $I$ we ask for the number---denoted
$\#A(I)$---of the objects that satisfy the condition.
For example, consider the following problem.

\begin{definition}
  An instance of X3C is a pair $(B,\calS)$, where $B = \{b_1, \ldots,
  b_{3k}\}$ and $\calS = \{S_1, \ldots, S_n\}$ is a family of
  $3$-element subsets of $B$. In X3C we ask if it is possible to find
  exactly $k$ sets in $\calS$ whose union is exactly $B$. In \#X3C we
  ask how many $k$-element subsets of $\calS$ have $B$ as their union.
\end{definition}

The class of counting variants of $\np$-problems is called $\sharpp$
and the class of functions computable in polynomial time is called
$\fp$.  To reduce counting problems to each other, we will use one of
the following reducibility notions.

\begin{definition}
  Let $\#A$ and $\#B$ be two counting problems. 
  \begin{enumerate} 
\fixenumerate
\item We say that $\#A$ Turing reduces to $\#B$ if there exists an
  algorithm that solves $\#A$ in polynomial time given oracle access
  to $\#B$ (i.e., given the ability to solve $\#B$ instances in unit
  time).
  \item We say that $\#A$ metric reduces to $\#B$ if there exist two
    polynomial-time computable functions, $f$ and $g$, such that for
    each instance $I$ of $\#A$ it holds that (1) $f(I)$ is an instance
    of $\#B$, and (2) $\#A(I) = g(I, \#B(f(I)) )$.
  \item We say that $\#A$ parsimoniously reduces to $\#B$ if $\#A$
    metric reduces to $\#B$ via functions $f$ and $g$ such that for
    each instance $I$ and each integer $k$, $g(I,k) = k$ (i.e., there
    is a polynomial-time computable function $f$ such that for each
    instance $I$ of $\#A$, we have $\#A(I) = \#B(f(I))$).
  \end{enumerate}
\end{definition}

It is well-known that these reducibility notions are transitive.  For
a given reducibility notion $R$, we say that a problem is
$\sharpp$-$R$-complete if it belongs to $\sharpp$ and every
$\sharpp$-problem $R$ reduces to it.  For example, \#X3C is
$\sharpp$-parsimonious-complete~\cite{hun-mar-rad-ste-:j:planar-counting-problems}.
Throughout this paper we will write $\sharpp$-complete to mean
$\sharpp$-parsimonious-complete. Turing reductions were used, e.g., by
Valiant~\shortcite{val:j:permanent} to show $\sharpp$-hardness of
computing a permanent of a 0/1 matrix. As a result, he also showed
$\sharpp$-Turing-hardness of the following problem. 
\begin{definition}
  In \#PerfectMatching we are given a bipartite graph $G =
  (G(X),G(Y),G(E))$ with $\|G(X)\| = \|G(Y)\|$ and we ask how many
  perfect matchings does $G$ have.
\end{definition}
Zank{\'{o}}~\cite{zan:j:sharp-p} improved upon this result by showing
that \#PerfectMatching is \#P-many-one-complete (``many-one'' is yet
another reducibility type, more general than parsimonious reductions
but less general than metric reductions). From our perspective, it
suffices that, in effect, \#PerfectMatching is \#P-metric-complete.
Metric reductions were introduced by
Krentel~\shortcite{kre:j:optimization}, and parsimonious reductions
were defined by Simon~\shortcite{sim:thesis:complexity}.

\section{Counting Variants of Control Problems}
\label{sec:problems}
In this section we formally define counting variants of election
control problems and show some interconnections between some of
them. We are interested in control by adding candidates (AC), control
by deleting candidates (DC), control by adding voters (AV), and
control by deleting voters (DV). For each of these problems, we
consider its constructive variant (CC) and its destructive variant
(DC).

\begin{definition}\label{def:control}
  Let $R$ be a voting system. In each of the counting variants of
  constructive control problems we are given a candidate set $C$, a
  voter collection $V$, a nonnegative integer $k$, and a designated
  candidate $p \in C$. In constructive control by adding voters we are
  additionally given a collection $W$ of unregistered voters, and in
  constructive control by adding candidates we are additionally given
  a set $A$ of unregistered candidates. In these problems we ask for
  the following quantities:
  \begin{enumerate}
\fixenumerate
  \item In control by adding voters ($R$-\#CCAV),
    we ask how many sets $W'$, $W' \subseteq W$, are there such that
    $p$ is the unique winner of $R$-election $(C,V \cup W')$, where
    $\|W'\| \leq k$.
  \item In control by deleting voters ($R$-\#CCDV), we ask how many
    sets $V'$, $V' \subseteq V$ are there such that $p$ is the unique
    winner of $R$-election $(C,V - V')$, where $\|V'\| \leq k$.
  \item In control by adding candidates ($R$-\#CCAC), we ask how many
    sets $A'$, $A' \subseteq A$, are there such that $p$ is the unique
    winner of $R$-election $(C \cup A',V)$, where $\|A'\| \leq k$.
  \item In control by deleting candidates ($R$-\#CCDC), we ask how
    many sets $C'$, $C' \subseteq C$, are there such that $p$ is the
    unique winner of $R$-election $(C - C',V)$, where $\|C'\|
    \leq k$ and $p \not\in C'$.
  \end{enumerate}
  Destructive variants are defined identically, except that we ask for
  the number of settings where the designated candidate is not the
  unique winner.

\end{definition}
\noindent
(To obtain decision variants of the above problems, simply change the
question from asking for a particular quantity to asking if that
quantity is greater than zero. We mention that in the literature
researchers often study both the unique-winner model---as we do
here---and the nonunique-winner model, where it suffices to be one of
the winners. For the rules that we study, results for both models
are the same.)\medskip

The above problems are interesting for several reasons. First, as
described in the introduction, we believe that they are useful as
models for various scenarios pertaining to predicting election
winners. Second, they are counting variants of the well-studied
control
problems~\cite{bar-tov-tri:j:control,hem-hem-rot:j:destructive-control}.
Third, they expose intuitive connections between various types of
control. In particular, we have the following results linking the
complexity of destructive variants with that of the constructive ones,
and linking the complexity of the ``deleting'' variants with that of
the ``adding'' variants.

\begin{theorem}\label{thm:c-to-d}
  Let $R$ be a voting system, let $\#\calC$ be one of $R$-\#CCAC,
  $R$-\#CCDC, $R$-\#CCAV, $R$-\#CCDV, and let $\#\calD$ be the
  destructive variant of $\#\calC$. Then, $\#\calC$ metric reduces to
  $\#\calD$ and $\#\calD$ metric reduces to $\#\calC$.
\end{theorem}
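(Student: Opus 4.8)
The plan is to exploit the fact that the constructive and destructive variants count \emph{complementary} families of control actions drawn from the same universe of ``legal'' actions. Fix one of the four control types, say $\#\calC = R$-\#CCAV, and let $I$ be an instance. Every subset $W' \subseteq W$ with $\|W'\| \le k$ is a legal control action; for each such action, $p$ either is or is not the unique winner of $(C, V \cup W')$. The constructive variant $\#\calC$ counts exactly those actions for which $p$ \emph{is} the unique winner, while the destructive variant $\#\calD$ counts those for which $p$ is \emph{not} the unique winner. Hence $\#\calC(I) + \#\calD(I) = T(I)$, where $T(I)$ denotes the total number of legal control actions for $I$.

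The first step is to observe that $T(I)$ is computable in polynomial time from $I$ alone, independently of $R$. For $R$-\#CCAV and $R$-\#CCDV the legal actions are the subsets of $W$ (respectively $V$) of size at most $k$, so $T(I) = \sum_{i=0}^{\min(k,\|W\|)} \binom{\|W\|}{i}$ in the first case and the analogous expression with $V$ in the second. For $R$-\#CCAC the count is $\sum_{i=0}^{\min(k,\|A\|)} \binom{\|A\|}{i}$, and for $R$-\#CCDC, since we additionally require $p \notin C'$, the legal actions are the size-at-most-$k$ subsets of $C - \{p\}$, giving $T(I) = \sum_{i=0}^{\min(k,m-1)} \binom{m-1}{i}$. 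In every case $T(I)$ is a sum of polynomially many binomial coefficients, each of polynomially many bits, so $T \in \FPclass$.

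With $T$ in hand, both reductions are immediate and use the identity map on instances, which is legitimate because (by Definition~\ref{def:control}) a destructive instance carries exactly the same data as its constructive counterpart and differs only in the question asked. To reduce $\#\calC$ to $\#\calD$, set $f(I) = I$ (reinterpreting the same data as a destructive instance) and $g(I, y) = T(I) - y$; then $\#\calC(I) = T(I) - \#\calD(I) = g(I, \#\calD(f(I)))$, and both $f$ and $g$ are polynomial-time computable, so this is a metric reduction. The reduction from $\#\calD$ to $\#\calC$ is symmetric, using the same $f$ and the same $g(I,y) = T(I) - y$.

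There is no real obstacle here: the only point requiring any care is that $T(I)$ is genuinely polynomial-time computable for each of the four control types, which amounts to writing down the appropriate partial sum of binomial coefficients and noting that it has polynomial bit-length. The argument is uniform in $R$ precisely because the structure of the legal-action universe depends only on the combinatorial constraints (such as $\|W'\| \le k$ and $p \notin C'$) and never on the voting rule itself.
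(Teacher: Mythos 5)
Your proposal is correct and follows essentially the same route as the paper's proof: complementary counting over the universe of legal control actions, with $g(I,y) = T(I) - y$ where $T(I)$ is the polynomial-time-computable total number of allowed actions (the paper calls this $s_I$). Your explicit case-by-case formulas for $T(I)$, including the $p \notin C'$ caveat for deleting candidates, are a slightly more detailed version of what the paper relegates to a footnote.
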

\begin{proof}
  We give a metric reduction from $\#\calC$ to $\#\calD$.  Let $I$ be
  an instance of $\#\calC$, where the goal is to make some candidate
  $p$ the unique winner. We define $f(I)$ to be an instance of
  $\#\calD$ that is identical to $I$, except that the goal is to
  ensure that $p$ is not the unique winner. Let $s_I$ be the total
  number of control actions allowed in $I$\footnote{For example, if
    $\#\calC$ was $\#CCDV$ and $I = (C,V,p,k)$, then $s_I$ would be
    the number of up-to-size-$k$ subsets of $V$ (i.e., the number of
    subsets of voters that can be deleted from $V$).} (and, naturally,
  also the total number of control actions allowed in $f(I)$). It is
  easy to see that $s_I$ is polynomial-time computable and that the
  number of solutions for $I$ is exactly $s_I - \#\calD(f(I))$.  Thus,
  we define $g(I,\#\calD(f(I))) = s_I - \#\calD(f(I))$.  We see that
  the reduction is polynomial-time and correct.
  The same argument shows that
  $\#\calD$ metric reduces to $\#\calC$.
\end{proof}

\begin{theorem}\label{thm:d-to-a}
  Let $R$ be a voting system, then $R$-\#CCDV Turing reduces to
  $R$-\#CCAV and $R$-\#CCDC Turing reduces to $R$-\#CCAC.
\end{theorem}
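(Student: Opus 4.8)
The plan is to handle both reductions with a single idea: a subset that is \emph{deleted} is the complement of the subset that is \emph{kept}, and any ``kept'' subset can be produced by \emph{adding} elements to an empty (or minimal) base. Concretely, for $R$-\#CCDV I would start from an instance $(C,V,p,k)$ with $\|V\| = n$ and observe that deleting a set $V' \subseteq V$ with $\|V'\| \le k$ is the same as keeping the set $U = V - V'$, which satisfies $\|U\| \ge n-k$. Thus the value $R$-\#CCDV$(C,V,p,k)$ counts exactly the subsets $U \subseteq V$ with $\|U\| \ge n-k$ for which $p$ is the unique $R$-winner of $(C,U)$.

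Next I would feed these ``kept'' subsets to the adding-voters oracle. Consider the $R$-\#CCAV instance with candidate set $C$, an \emph{empty} set of registered voters, unregistered pool $W = V$, designated candidate $p$, and size bound $j$; the oracle then returns $\mathcal O(j)$, the number of subsets $U \subseteq V$ with $\|U\| \le j$ for which $p$ uniquely wins $(C,U)$. The only mismatch is the direction of the size constraint: adding bounds $\|U\|$ from above, whereas deleting bounds it from below. I would bridge this with the complementary identity, using two oracle calls:
\[
 R\text{-\#CCDV}(C,V,p,k) \;=\; \mathcal O(n) - \mathcal O(n-k-1),
\]
where $\mathcal O(n)$ counts \emph{all} winning kept-sets and $\mathcal O(n-k-1)$ subtracts off those that keep too few voters (adopting the convention that a call with a negative bound returns $0$). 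Both values are single oracle queries, the arithmetic is polynomial-time, and the construction is oblivious to $R$, so this is a valid Turing reduction.

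For $R$-\#CCDC the same template applies with candidates in place of voters and the extra wrinkle that $p$ must never be deleted. From $(C,V,p,k)$ with $\|C\| = m$, deleting $C'$ (with $p \notin C'$, $\|C'\| \le k$) is the same as keeping $D = C - C' \ni p$ with $\|D\| \ge m-k$. I would realize such $D$ as $\{p\}\cup A'$ by calling the adding-candidates oracle on candidate set $\{p\}$, voters $V$, unregistered pool $A = C - \{p\}$, designated candidate $p$, and bound $j$; writing $\mathcal O'(j)$ for the resulting count of $A' \subseteq C-\{p\}$ with $\|A'\| \le j$ and $p$ the unique winner of $(\{p\}\cup A', V)$, and noting $\|A'\| = \|D\|-1$, the lower bound $\|D\| \ge m-k$ becomes $\|A'\| \ge m-k-1$. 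Hence
\[
 R\text{-\#CCDC}(C,V,p,k) \;=\; \mathcal O'(m-1) - \mathcal O'(m-k-2).
\]

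I expect the only real obstacle to be the constraint-direction flip: because adding bounds cardinality from above while deleting forces it from below, no single parsimonious call can capture the count, and one must pass to two oracle queries and a subtraction---precisely why the statement claims a \emph{Turing} reduction rather than a parsimonious or metric one. The remaining work is routine: checking that the empty registered-voter set (resp.\ the singleton base $\{p\}$) is a legal instance, verifying that the kept-set/deleted-set correspondence is a bijection preserving the winner condition, and handling degenerate parameter ranges (e.g.\ $k \ge n$, or $k \ge m-1$) where a bound goes negative and the corresponding oracle term is simply $0$.
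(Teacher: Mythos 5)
Your proposal is correct and follows essentially the same route as the paper: both reductions pass the deleted-set/kept-set complement through an adding-type oracle with an empty base (resp.\ the base $\{p\}$) and recover the lower-bounded cardinality count as a difference of two upper-bounded oracle calls, with the same formulas $\mathcal O(n)-\mathcal O(n-k-1)$ and $\mathcal O'(m-1)-\mathcal O'(m-k-2)$ appearing (as a case split) in the paper's proof. Your closing remark about why two calls force a Turing rather than a metric reduction is a correct reading of the situation.
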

\begin{proof}
  Let us fix a voting system $R$. The proofs that $R$-\#CCDV Turing
  reduces to $R$-\#CCAV and that $R$-\#CCDC Turing-reduces to
  $R$-\#CCAV are very similar and thus we discuss them jointly.
  Let $I = (C,V,p,k)$ be an input instance of $R$-\#CCDV (of
  $R$-CCDC), where $C$ is the candidate set, $V$ is the collection of
  voters, $p$ is the designated candidate, and $k$ is the upper bound
  on the number of voters that can be removed. We define the following
  transformations of $I$:
  \begin{enumerate}\fixenumerate
  \item For the case of $R$-\#CCDV, for each nonnegative integer $g$,
    $g \le \|V\|$, let $J_g$ be an instance of $R$-\#CCAV, where the
    candidate set is $C$, the set of registered voters is empty, the
    collection of unregistered voters is $V$, the designated candidate
    is $p$, and the bound on the number of voters that can be added is
    $g$, That is, $J_g = (C,\emptyset,V,p,g)$.
 
  \item For the case of $R$-\#CCDC, for each nonnegative integer $g$,
    $g \le \|C\|-1$, let $J_g$ be an instance of $R$-\#CCAC, where the
    candidate set is $\{p\}$, the set of registered voters is $V$, the
    set of unregistered candidates is $C-\{p\}$, the designated
    candidate is $p$, and the bound on the number of candidates that
    can be added is $g$. That is, $J_g = (\{p\},C-\{p\},V,p,g)$.
  \end{enumerate}
  To complete the reduction, it suffices to note that for
  the case of $R$-\#CCDV it holds that:
  \[
    \text{$R$-\#CCDV}(I) = \left\{
    \begin{array}{ll}
      \text{$R$-\#CCAV}(J_{\|V\|}) & \text{, if $k \ge \|V\|$}\\
      \text{$R$-\#CCAV}(J_{\|V\|}) - \text{$R$-\#CCAV}(J_{\|V\|-k-1}) & \text{, if $0 \leq k < \|V\|$}\\
    \end{array}
    \right.
  \]
  and for the case of $R$-\#CCDC it holds that:
  \[
    \text{$R$-\#CCDC}(I) = \left\{
    \begin{array}{ll}
      \text{$R$-\#CCAC}(J_{\|C\|-1}) & \text{, if $k \ge \|C\|-1$}\\
      \text{$R$-\#CCAC}(J_{\|C\|-1}) - \text{$R$-\#CCAC}(J_{\|C\|-k-2}) & \text{, if $0 \leq k < \|C\|-1$}\\
    \end{array}
    \right.
  \]
  These expressions define a natural algorithm for solving $R$-\#CCDV
  ($R$-\#CCDC) given at most two calls to an $R$-\#CCAV (an
  $R$-\#CCAC) oracle.  It is clear that this Turing reduction runs in
  polynomial time.
\end{proof}

Theorems~\ref{thm:c-to-d} and~\ref{thm:d-to-a} are very useful and, in
particular, we obtain all of our destructive-case hardness results via
Theorem~\ref{thm:c-to-d}, and all our ``deleting''-case easiness
results using Theorem~\ref{thm:d-to-a}.

Obtaining results similar to Theorems~\ref{thm:c-to-d}
and~\ref{thm:d-to-a} for the decision variants of control problems is
more difficult. Nonetheless, recently several researchers have made
some progress on this front. In particular, Hemaspaandra,
Hemaspaandra, and Menton~\cite{hem-hem-men:c:search-decision} have
shown that control by partition of candidates and control by runoff
partition of candidates (two types of control not discussed in this
paper) are equivalent in terms of their computational complexity.
Focusing on particular classes of voting rules, for the case of
$k$-Approval and $k$-Veto, Faliszewski, Hemaspaandra, and
Hemaspaandra~\cite{fal-hem-hem:c:weighted-control} have shown that
control by deleting voters reduces to control by adding voters, and
Mi\k{a}sko~\cite{mia:t:priced-control} achieved the same for the case
of voting rules based on the weighted majority graphs (e.g., for
Borda, Condorcet, Maximin, and Copeland).\footnote{Mi\k{a}sko's
  master's thesis gives this result for the case of Borda and
  destructive control only, but it is clear that the proof technique
  adapts to constructive control and that it applies to all voting
  rules based on weighted majority graphs.}

\section{Results}
\label{sec:results}

We now present our complexity results regarding counting variants of
election control problems. We consider two cases: the unrestricted
case, where the voters can have any arbitrary preference orders, and
the single-peaked case, where voters' preference orders are
single-peaked with respect to a given societal axis. We summarize our
results in Table~\ref{tab:results}. For comparison, in
Table~\ref{tab:decision} we quote results for the decision variants of
the respective control problems.

Let us consider the unrestricted case first.  Not surprisingly, for
each of our control problems whose decision variant is $\np$-complete,
the counting variant is $\sharpp$-complete for some reducibility type.
However, interestingly, we have also found examples of election
systems and control types where the decision variant is easy, but the
counting variant is hard. For example, for $2$-Approval we have that
all decision variants of voter control problems are in
$\p$~\cite{lin:thesis:elections,fal-hem-hem:c:weighted-control}, yet
all their counting variants are $\sharpp$-Turing-complete. Similarly,
for Maximin all decision variants of candidate control (except
constructive control by adding candidates) are in
$\p$~\cite{fal-hem-hem:j:multimode}, yet their counting variants are
$\sharpp$-Turing-complete (or are complete for $\sharpp$ through even
less demanding reducibility types).

The situation for the single-peaked case is quite interesting as well.
While we found polynomial-time algorithms for all of our control
problems for Plurality, $k$-Approval, and all Condorcet-consistent
rules, we did not obtain results for voter control under Approval
(candidate control for Approval is easy even in the unrestricted
case).  Decision variants of voter control are $\np$-complete for
Approval in the unrestricted case, but---very interestingly---are in
$\p$ for the single-peaked case. However, the algorithm for the
single-peaked case is a clever greedy approach that seems to be very
difficult to adapt to the counting case. Further, as opposed to the
$k$-Approval case, for voter control under Approval, there does not
seem to be a natural dynamic-programming-based approach.  Naturally,
we also tried to find a hardness proof, but we failed at that as well
because the problem seems to have quite a rich structure.  (However,
see the work of Mi\k{a}sko~\cite{mia:t:priced-control} for the case of
priced control under Approval with single-peaked preferences.)

\newcommand{\complete}{com.}

\tabcolsep=0.1cm
\begin{table*}[t]
\small

\begin{center}

(a) The Unrestricted Case

\hfill{}
	\begin{tabular}{c|cccccc}
          \hline
          Problem\rule{0cm}{0.5cm} & Plurality & $k$-Approval, $k \geq 2$ & Approval & Condorcet  & Maximin\\[0.15cm]
          \hline 
          \#CCAC\rule{0cm}{0.5cm}& $\sharpp$-\complete & $\sharpp$-\complete &-- & --  & $\sharpp$-\complete\\
          \#DCAC & $\sharpp$-metric-\complete & $\sharpp$-metric-\complete & FP & FP  & $\sharpp$-metric-\complete \\
          \#CCDC & $\sharpp$-\complete & $\sharpp$-\complete &FP & FP & $\sharpp$-Turing-\complete \\
          \#DCDC & $\sharpp$-metric-\complete & $\sharpp$-metric-\complete & -- & -- & $\sharpp$-Turing-\complete \\[0.4em]
          \#CCAV & $\fp$ & $\sharpp$-Turing-\complete & $\sharpp$-\complete & $\sharpp$-\complete & $\sharpp$-\complete \\
          \#DCAV & $\fp$ & $\sharpp$-Turing-\complete & $\sharpp$-metric-\complete & $\sharpp$-metric-\complete & $\sharpp$-metric-\complete  \\
          \#CCDV & $\fp$ & $\sharpp$-metric-\complete & $\sharpp$-\complete & $\sharpp$-\complete & $\sharpp$-\complete \\
          \#DCDV & $\fp$ & $\sharpp$-metric-\complete & $\sharpp$-metric-\complete & $\sharpp$-metric-\complete & $\sharpp$-metric-\complete  \\
	\end{tabular}

\vspace{0.5cm}

(b) The Single-Peaked Case

\hfill{}

	\begin{tabular}{c|cccc}
          \hline
          Problem\rule{0cm}{0.5cm} & Plurality & $k$-Approval, $k \geq 2$ & Approval & Condorcet-Consistent \\
                                   &           &                          &          & (e.g., Maximin)\\[0.15cm]
          \hline 
          \#CCAC\rule{0cm}{0.5cm}& $\fp$ & $\fp$ &-- & --  \\
          \#DCAC & $\fp$ & $\fp$ & FP & FP   \\
          \#CCDC & $\fp$ & $\fp$ &FP & FP  \\
          \#DCDC & $\fp$ & $\fp$ & -- & --  \\[0.4em]
          \#CCAV & $\fp$ & $\fp$ & ? & $\fp$  \\
          \#DCAV & $\fp$ & $\fp$ & ? & $\fp$   \\
          \#CCDV & $\fp$ & $\fp$ & ? & $\fp$  \\
          \#DCDV & $\fp$ & $\fp$ & ? & $\fp$   \\
	\end{tabular}

	\hfill{}
	\caption{\label{tab:results}The complexity of counting
          variants of control problems for (a) the unrestricted case,
          and for (b) the single-peaked case. A dash in an entry means
          that the given system is \emph{immune} to the type of
          control in question (i.e., it is impossible to achieve the
          desired effect by the action this control problem allows;
          technically this means the answer to the counting question
          is always $0$). Immunity results were established by
          Bartholdi, Tovey, and
          Trick~(1989) %
          for the constructive cases, and by Hemaspaandra,
          Hemaspaandra, and
          Rothe~(2007) %
          for the destructive cases.  }

\end{center}
\end{table*}

\tabcolsep=0.1cm
\begin{table*}[t]
\small

\begin{center}

(a) The Unrestricted Case

\hfill{}

	\begin{tabular}{c|ccccccc}
          \hline
          Problem\rule{0cm}{0.5cm} & Plurality & $2$-Approval & $3$-Approval &$k$-Approval, $k \geq 4$ & Approval & Condorcet  & Maximin\\[0.15cm]
          \hline 
          CCAC\rule{0cm}{0.5cm}& $\np$-\complete & $\np$-\complete & $\np$-\complete & $\np$-\complete &-- & --  & $\np$-\complete\\
          DCAC & $\np$-\complete & $\np$-\complete & $\np$-\complete & $\np$-\complete& $\p$ & $\p$  & $\p$ \\
          CCDC & $\np$-\complete & $\np$-\complete & $\np$-\complete & $\np$-\complete& $\p$ & $\p$ & $\p$ \\
          DCDC & $\np$-\complete & $\np$-\complete & $\np$-\complete & $\np$-\complete& -- & -- & $\p$ \\[0.4em]
          CCAV & $\p$ & $\p$  & $\p$ & $\np$-\complete & $\np$-\complete & $\np$-\complete & $\np$-\complete \\
          DCAV & $\p$ & $\p$ & $\p$ & $\np$-\complete & $\p$ & $\p$ & $\np$-\complete  \\
          CCDV & $\p$ & $\p$ & $\np$-\complete & $\np$-\complete & $\np$-\complete & $\np$-\complete & $\np$-\complete \\
          DCDV & $\p$ & $\p$ & $\p$ & $\np$-\complete & $\p$ & $\p$ & $\np$-\complete  \\
	\end{tabular}

\vspace{0.5cm}

(b) The Single-Peaked Case

\hfill{}

	\begin{tabular}{c|cccc}
          \hline
          Problem\rule{0cm}{0.5cm} & Plurality & $k$-Approval, $k \geq 2$ & Approval & Condorcet-Consistent \\
                                   &           &                          &          & (e.g., Maximin)\\[0.15cm]
          \hline 
          CCAC\rule{0cm}{0.5cm}& $\p$ & $\p$ &-- & --  \\
          DCAC & $\p$ & $\p$ & $\p$ & $\p$   \\
          CCDC & $\p$ & $\p$ & $\p$ & $\p$  \\
          DCDC & $\p$ & $\p$ & -- & --  \\[0.4em]
          CCAV & $\p$ & $\p$ & $\p$ & $\p$  \\
          DCAV & $\p$ & $\p$ & $\p$ & $\p$   \\
          CCDV & $\p$ & $\p$ & $\p$ & $\p$  \\
          DCDV & $\p$ & $\p$ & $\p$ & $\p$   \\
	\end{tabular}

	\hfill{}
	\caption{\label{tab:decision}The complexity of decision
          variants of control problems for (a) the unrestricted case,
          and for (b) the single-peaked case. A dash in an entry means
          that the given system is \emph{immune} to the type of
          control in question. For the unrestricted case, we have the
          following: The results for Plurality are due to Bartholdi,
          Tovey, and Trick~\cite{bar-tov-tri:j:control} and
          Hemaspaandra, Hemaspaandra, and
          Rothe~\cite{hem-hem-rot:j:destructive-control}, the results
          regarding $k$-Approval are due to
          Lin~\cite{lin:thesis:elections} (see
          also~\cite{elk-fal-sli:j:cloning,fal-hem-hem:c:weighted-control}),
          the results regarding Approval are due to Hemaspaandra,
          Hemaspaandra, and
          Rothe~\cite{hem-hem-rot:j:destructive-control}, and the
          results regarding Maximin are due to Faliszewski,
          Hemaspaandra, and
          Hemaspaandra~\cite{fal-hem-hem:j:multimode}.  For the
          single-peaked case, we inherit polynomial-time algorithms
          from the unrestricted case. The remaining results (except
          those for Condorcet-consistent rules) are due to Faliszewski
          et al.~\cite{fal-hem-hem-rot:j:single-peaked-preferences},
          and the remaining results regarding Condorcet-consistent
          rules are due to~\cite{bra-bri-hem-hem:c:sp2}.}

\end{center}
\end{table*}

In the following sections we give proofs for our results and provide
some more detailed discussion regarding particular voting rules.

\subsection{Plurality Voting}
Under plurality voting, counting variants of both control by adding
voters and control by deleting voters are in $\fp$ even for the
unrestricted case. Our algorithms are based on dynamic programming and
applications of Theorems~\ref{thm:c-to-d} and~\ref{thm:d-to-a}.

\begin{theorem} \label{th:plavc} 
  Plurality-\#CCAV, Plurality-\#DCAV, Plurality-\#CCDV, and
  Plurality-\#DCDV are in $\fp$, even in the unrestricted case.
\end{theorem}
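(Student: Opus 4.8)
The plan is to reduce all four problems to the single problem Plurality-\#CCAV and then solve that by dynamic programming. By Theorem~\ref{thm:c-to-d} each destructive variant metric-reduces to the corresponding constructive variant, and by Theorem~\ref{thm:d-to-a} Plurality-\#CCDV Turing-reduces to Plurality-\#CCAV. Since both metric and Turing reductions preserve membership in $\fp$, it suffices to show that Plurality-\#CCAV is in $\fp$: the remaining three then follow, with Plurality-\#DCAV obtained from Plurality-\#CCAV via Theorem~\ref{thm:c-to-d}, Plurality-\#CCDV via Theorem~\ref{thm:d-to-a}, and Plurality-\#DCDV from Plurality-\#CCDV via Theorem~\ref{thm:c-to-d}.

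So fix an instance $(C,V,W,p,k)$ of Plurality-\#CCAV. The key observation is that under Plurality a voter affects the outcome only through its top candidate, and adding voters only increases scores additively and independently across candidates. The plan is therefore to classify the unregistered voters: for each $c \in C$, let $n_c$ be the number of voters in $W$ whose top choice is $c$, and let $s_c$ be the Plurality score of $c$ in $(C,V)$. For counting purposes a subset $W' \subseteq W$ is fully described by the profile $(a_c)_{c \in C}$ recording how many of its voters top-rank each $c$; conversely a profile with $0 \le a_c \le n_c$ is realized by exactly $\prod_{c \in C} \binom{n_c}{a_c}$ distinct subsets, each of size $\sum_c a_c$. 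Candidate $p$ is the unique winner of $(C, V \cup W')$ exactly when $s_p + a_p > s_c + a_c$ for every $c \neq p$. Hence Plurality-\#CCAV$(C,V,W,p,k)$ equals
\[
  \sum_{(a_c)_c} \ \prod_{c \in C} \binom{n_c}{a_c},
\]
where the sum ranges over profiles with $0 \le a_c \le n_c$, $\sum_c a_c \le k$, and $s_c + a_c < s_p + a_p$ for all $c \neq p$.

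To evaluate this sum in polynomial time, I would iterate over $t = a_p$, the number of added top-$p$ voters, for $0 \le t \le \min(n_p,k)$. Fixing $t$ pins $p$'s final score to $S = s_p + t$ and turns the winner condition into independent caps $a_c \le u_c := \min\{n_c,\ S - 1 - s_c\}$ for $c \neq p$ (the value $t$ being infeasible if some $u_c < 0$), while the budget for the remaining candidates becomes $\sum_{c \neq p} a_c \le k - t$. The inner quantity is then a weighted count over independently capped nonnegative integers subject to a shared bound on their sum, which a standard one-dimensional dynamic program computes: ordering the candidates other than $p$ as $c_1, \dots, c_{m-1}$, define $f_r(x)$ as the sum of $\prod_{i \le r} \binom{n_{c_i}}{a_{c_i}}$ over all choices with $0 \le a_{c_i} \le u_{c_i}$ and $\sum_{i \le r} a_{c_i} = x$, with recurrence $f_r(x) = \sum_{a=0}^{\min(u_{c_r},x)} \binom{n_{c_r}}{a}\, f_{r-1}(x-a)$ and base case $f_0(0) = 1$. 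The contribution of a given $t$ is $\binom{n_p}{t} \sum_{x=0}^{k-t} f_{m-1}(x)$, and summing over $t$ yields the answer.

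Finally I would verify the running time. There are $O(n)$ values of $t$, the table has $O(m)$ rows and $O(k)$ columns, and each entry costs $O(k)$ additions, so the whole computation is polynomial in $m$, $n$, and $k$; all intermediate values are bounded by $2^{O(n)}$ and thus have polynomially many bits, keeping the arithmetic polynomial-time. The only mildly delicate point, and the one I would watch most carefully, is that the per-candidate caps $u_c$ and hence the whole dynamic-programming table depend on $p$'s final score $S$, so the table must be recomputed for each $t$ rather than reused; fortunately this only contributes the $O(n)$ outer factor and does not threaten polynomiality.
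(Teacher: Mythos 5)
Your proposal is correct and follows essentially the same route as the paper's proof: reduce the other three problems to Plurality-\#CCAV via Theorems~\ref{thm:c-to-d} and~\ref{thm:d-to-a}, iterate over the number of added voters who rank $p$ first (your $t$ is the paper's $j$, your caps $u_c$ are the paper's $l_i$), and count the remaining choices with a per-candidate dynamic program weighted by binomial coefficients (your $f_r(x)$ is the paper's $a_{t,i}$). The profile-based framing is a cleaner way to state the same decomposition, but the algorithm is identical.
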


\begin{proof}
  We give the proof for Plurality-\#CCAV only. The result for
  Plurality-\#CCDV follows by applying Theorem~\ref{thm:d-to-a}, and
  the destructive cases follows by applying Theorem~\ref{thm:c-to-d}.

  Let $I = (C,V,W,p,k)$ be an input instance of Plurality-\#CCAV,
  where $C = \{p,c_1,\ldots, c_{m-1}\}$ is the candidate set, $V$ is
  the set of registered voters, $W$ is the set of unregistered voters,
  $p$ is the designated candidate, and $k$ is the upper bound on the
  number of voters that can be added. We now describe a polynomial-time
  algorithm that computes the number of solutions for $I$.

  Let $A_p$ be the set of voters from $W$ that rank $p$ first.
  Similarly, for each $c_i \in C$, let $A_{c_i}$ be the set of voters
  from $W$ that rank $c_i$ first. We also define
  $\id{count}(C,V,W,p,k,j)$ to be the number of sets $W' \subseteq W - A_p$
  such that
    (1) $\|W'\| \leq k-j$, and
    (2) in election $(C,V \cup W')$ each candidate $c_i \in C$, $1
    \leq i \leq m-1$, has score at most $\score{(C,V)}^p(p)+j-1$.

  Our algorithm works as follows.  First, we compute $k_0$, the
  minimum number of voters from $A_p$ that need to be added to $V$ to
  ensure that $p$ has plurality score higher than any other candidate
  (provided no other voters are added). Clearly, if $p$ already is the
  unique winner of $(C,V)$ then $k_0$ is $0$, and otherwise $k_0$ is
  $\max_{c_i \in C}(\score{(C,V)}^p(c_i)-\score{(C,V)}^p(p)+1)$. Then,
  for each $j$, $k_0 \leq j \leq \min(k,\|A_p\|)$, we compute the number
  of sets $W'$, $W' \subseteq W$, such that $W'$ contains exactly $j$
  voters from $A_p$, at most $k-j$ voters from $W - A_p$, and $p$ is
  the unique winner of $(C, V \cup W')$. It is easy to verify that for
  a given $j$, there is exactly $h(j) =
  \binom{\|A_p\|}{j}\cdot\id{count}(C,V,W\!,p,k,j)$ such sets. Our
  algorithm returns $\sum_{j=k_0}^{\min(k,\|A_p\|)}h(j)$. The reader can
  easily verify that this indeed is the correct answer. To complete
  the proof it suffices to show a polynomial-time algorithm for
  computing $\id{count}(C,V,W,p,k,j)$.

  Let us fix $j$, $k_0 \leq j \leq \min(k,\|A_p\|)$. We now show how to
  compute $\id{count}(C,V,W,p,k,j)$. Our goal is to count the number
  of ways in which we can add at most $k-j$ voters from $W-A_p$ so
  that no candidate $c_i \in C$ has score higher than
  $\score{(C,V)}^p(p)+j-1$.  For each candidate $c_i \in C$, we can
  add at most
  $
  l_i =
  \min\bigl(\|A_{c_i}\|,j+\score{(C,V)}^p(p)-\score{(C,V)}^p(c_i)-1\bigr)
  $
  voters from $A_{c_i}$; otherwise $c_i$'s score would exceed
  $\score{(C,V)}^p(p)+j-1$.

  For each $i$, $1 \leq i \leq m-1$, and each $t$, $0 \leq t \leq
  k-j$, let $a_{t,i}$ be the number of sets $W' \subseteq A_{c_1}\cup
  A_{c_2}\cup\cdots\cup A_{c_i}$ that contain exactly $t$ voters and
  such that each candidate $c_1, c_2, \ldots, c_i$ has score at most
  $\score{(C,V)}^p(p)+j-1$ in the election $(C,V \cup W')$. Naturally,
  $\id{count}(C,V,W,p,k,j) = \sum_{t=0}^{k-j}a_{t,m-1}$.  It is easy
  to check that $a_{t,i}$ satisfies the following recursion:
  \[
  a_{t,i} =
  \begin{cases}
    \sum_{s=0}^{\min(l_i,t)}\binom{\|A_{c_i}\|}{s}a_{t-s,i-1}, & \text{if $t>0$, $i>1$}, \\[2mm]
    1, & \text{if $t=0$, $i>1$}, \\[1mm]
    \binom{\|A_1\|}{t}, & \text{if $t\le\|A_{c_1}\|$, $i=1$}, \\[1mm]
    0, & \text{if $t>\|A_{c_1}\|$, $i=1$}.
  \end{cases}
  \]
  Thus, for each $t,i$ we can compute $a_{t,i}$ using standard dynamic
  programming techniques in polynomial time. Thus,
  $\id{count}(C,V,W,p,k,j)$ also is polynomial-time computable.  This
  completes the proof that Plurality-\#CCAV is in $\fp$.
\end{proof}

On the other hand, for Plurality voting \#CCAC and \#CCDC are
$\sharpp$-complete and this follows from proofs already given in the
literature~\cite{fal-hem-hem:j:nearly-sp}.

\begin{theorem}\label{thm:placdcc}
  In the unrestricted case, Plurality-\#CCAC and Plurality-\#CCDC are $\sharpp$-complete.
\end{theorem}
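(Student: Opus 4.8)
The plan is to establish membership in $\sharpp$ and then $\sharpp$-hardness under parsimonious reductions, separately but in the same style. For membership, I would observe that both problems are counting versions of the $\np$ problems Plurality-CCAC and Plurality-CCDC: a nondeterministic machine for Plurality-\#CCAC guesses a subset $A' \subseteq A$ with $\|A'\| \le k$ and accepts exactly when $p$ is the unique Plurality winner of $(C \cup A', V)$, and the analogous machine for Plurality-\#CCDC guesses a subset $C' \subseteq C \setminus \{p\}$ with $\|C'\| \le k$ and accepts exactly when $p$ is the unique winner of $(C - C', V)$. Since Plurality scores and the strict-winner test are polynomial-time computable, the number of accepting paths equals the number of solutions, so both problems lie in $\sharpp$.

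For hardness I would reduce from \#X3C, which the excerpt records as $\sharpp$-complete, by reusing the classical control reductions of Bartholdi, Tovey, and Trick~\cite{bar-tov-tri:j:control} (as also used in~\cite{fal-hem-hem:j:nearly-sp}) and checking that they are in fact \emph{parsimonious}. For Plurality-\#CCAC, given an X3C instance $(B,\calS)$ with $B = \{b_1,\dots,b_{3k}\}$, I would take registered candidates $\{p\} \cup B$, one addable spoiler $s_j$ per set $S_j$, and budget $k$. Writing $d_i$ for the number of sets containing $b_i$ and $T = \max(4,\max_i d_i)$, I would add a block of $T$ voters ranking $p$ first (all spoilers below), and for each $b_i$ create $d_i$ ``covering'' voters of the form $s_j > b_i > \cdots$ (one per $S_j \ni b_i$) together with $T - d_i$ padding voters ranking $b_i$ first with every spoiler below, so that each $b_i$ starts at score $T$. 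Then $p$ has score $T$ throughout, each added $s_j$ acquires exactly its three covering voters (score $3 < T$) and lowers each element of $S_j$ by one, so $p$ is the unique winner of $(C \cup A',V)$ iff every $b_i$ is covered by some $S_j \in A'$.

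The heart of the argument is the parsimony bookkeeping. Because the $3k$ elements must be covered by $3$-element sets, any $A'$ with $\|A'\| \le k$ that covers all of $B$ must use exactly $k$ pairwise-disjoint sets, i.e.\ form an exact cover; conversely every exact cover is such a covering of size $k$. Hence the set of control solutions coincides with the set of exact covers of $(B,\calS)$ under the identity map, giving $\text{Plurality-\#CCAC}(f(B,\calS)) = \#\mathrm{X3C}(B,\calS)$ and thus a parsimonious reduction. For Plurality-\#CCDC I would invoke the analogous X3C-based deletion reduction from the literature and carry out the identical verification: the deletion budget again forces exact covers, and the scores are arranged so that $p$ becomes the unique winner precisely for deletion sets in bijection with exact covers.

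The step I expect to be the main obstacle is confirming that the reductions are genuinely parsimonious rather than merely preserving nonemptiness, i.e.\ that there are no spurious solutions. Concretely I must pin down the inequalities guaranteeing that (i) no spoiler ever overtakes $p$ (ensured by $T > 3$ and by keeping spoilers below $p$ in $p$'s dedicated votes), (ii) \emph{over}-covering an element leaves $p$ still winning, so the characterization does not accidentally tighten to ``cover each element exactly once,'' and (iii) the strict unique-winner requirement matches ``every element covered,'' with the initial ties ruling out any $A'$ that leaves some element uncovered. Getting these score inequalities exactly right, so that the solution set equals (not merely intersects) the exact covers, is what upgrades the known $\np$-hardness constructions to parsimonious $\sharpp$-hardness.
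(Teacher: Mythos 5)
Your proposal is correct and follows essentially the same route as the paper: the paper's proof simply observes that the known X3C-based reductions for Plurality-CCAC and Plurality-CCDC (in the form given by Faliszewski, Hemaspaandra, and Hemaspaandra) are already parsimonious, so they carry over to \#X3C and the counting variants. Your explicit spoiler-candidate construction for \#CCAC, together with the observation that a budget of $k$ forces any covering of the $3k$ elements by $3$-sets to be an exact cover (which is exactly the parsimony bookkeeping the paper relies on), is a valid instantiation of that same argument.
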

\begin{proof}
  Faliszewski, Hemaspaandra, and
  Hemaspaandra~\cite[Theorem~6.4]{fal-hem-hem:j:nearly-sp} show that
  decision variants of control by adding candidates and control by
  deleting candidates are $\np$-complete.\footnote{Naturally, these
    results are originally due to Bartholid, Tovey, and
    Trick~\cite{bar-tov-tri:j:control}. However, the proofs of Faliszewski, Hemaspaandra,
    and Hemaspaandra are more useful in our case, because they work
    for $k$-Approval for each fixed $k$ (which will be useful later),
    and it is convenient to verify that they give parsimonious
    reductions from a $\sharpp$-complete problem.} Their proofs work
  by reducing X3C to appropriate control problems in a parsimonious
  way. This means that the same reductions reduce \#X3C to the
  counting variants of respective control problems.
\end{proof}

\noindent
Now, Corollary~\ref{cor:placdcdc} follows by combining
Theorems~\ref{thm:placdcc} and~\ref{thm:c-to-d}.

\begin{corollary}\label{cor:placdcdc}
  In the unrestricted case, Plurality-\#DCAC and Plurality-\#DCDC are $\sharpp$-metric-complete.
\end{corollary}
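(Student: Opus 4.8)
The plan is to establish both membership in $\sharpp$ and $\sharpp$-metric-hardness for Plurality-\#DCAC and Plurality-\#DCDC, and then conclude completeness. For membership, I would first observe that each of these destructive counting problems is the counting version of a problem in $\np$: given a candidate subset $A'$ to add (respectively $C'$ to delete) that respects the size bound, one can verify in polynomial time whether the designated candidate $p$ fails to be the unique Plurality winner of the resulting election. Hence both problems lie in $\sharpp$ directly from the definition, and the work reduces to proving hardness.

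For hardness, the key idea is to chain two reductions we already have in hand. First, Theorem~\ref{thm:placdcc} tells us that Plurality-\#CCAC and Plurality-\#CCDC are $\sharpp$-complete, which in our convention means $\sharpp$-parsimonious-complete; so every $\sharpp$ problem parsimoniously reduces to each of them. Since a parsimonious reduction is a special case of a metric reduction (the one with $g(I,k)=k$), every $\sharpp$ problem also metric reduces to Plurality-\#CCAC and to Plurality-\#CCDC. Second, I would invoke Theorem~\ref{thm:c-to-d} with $R$ equal to Plurality and $\#\calC$ equal to Plurality-\#CCAC (respectively Plurality-\#CCDC): this yields a metric reduction from the constructive variant to its destructive counterpart $\#\calD$, namely Plurality-\#DCAC (respectively Plurality-\#DCDC). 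Composing these two metric reductions and using the transitivity of metric reducibility (noted just after the definition of our reducibility notions) shows that every $\sharpp$ problem metric reduces to Plurality-\#DCAC and to Plurality-\#DCDC, establishing $\sharpp$-metric-hardness. Together with membership this gives $\sharpp$-metric-completeness.

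There is essentially no computational obstacle here; the one point requiring care is bookkeeping about reducibility strength. We obtain only $\sharpp$-\emph{metric}-completeness, and not parsimonious completeness, precisely because the reduction supplied by Theorem~\ref{thm:c-to-d} is genuinely metric: its output function is of the form $s_I - \#\calD(f(I))$ rather than the identity, so the composed reduction does not preserve counts exactly. This is exactly why the corollary is stated with metric reducibility and matches the corresponding entries in Table~\ref{tab:results}.
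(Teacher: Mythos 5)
Your proof is correct and follows exactly the paper's route: the paper derives this corollary in one line by combining Theorem~\ref{thm:placdcc} (parsimonious hardness of the constructive variants) with the metric reduction of Theorem~\ref{thm:c-to-d}, which is precisely the composition you describe. Your additional remarks on membership in $\sharpp$ and on why only metric (not parsimonious) completeness is obtained are accurate elaborations of what the paper leaves implicit.
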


However, for the  single-peaked case, we get easiness results. 
\begin{theorem}
  For the case of single-peaked profiles, Plurality-\#CCAC,
  Plurality-\#CCDC, Plurality-\#DCAC, and Plurality-\#DCDC are in
  $\fp$.
\end{theorem}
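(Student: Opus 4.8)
The plan is to establish membership in $\fp$ only for Plurality-\#CCAC and to obtain the other three problems for free. Plurality-\#CCDC would follow by Theorem~\ref{thm:d-to-a}, whose Turing reduction maps a \#CCDC instance $(C,V,p,k)$ to \#CCAC instances of the form $(\{p\},C-\{p\},V,p,g)$; these use the \emph{same} voters and the same candidate universe $C$, hence the same societal axis, so single-peakedness is preserved and the (at most two) oracle calls are themselves single-peaked \#CCAC instances. The destructive problems Plurality-\#DCAC and Plurality-\#DCDC then follow from the corresponding constructive ones by Theorem~\ref{thm:c-to-d}, which leaves the votes untouched. Thus everything reduces to a polynomial-time algorithm for single-peaked Plurality-\#CCAC.

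Fix a \#CCAC instance $(C,V,A,p,k)$ with a societal axis $L$ making every vote over $C\cup A$ single-peaked, and list $C\cup A$ as $d_1\spl d_2\spl\cdots\spl d_N$. The structural fact I would use is the standard one for single-peaked Plurality: once the present set $S=C\cup A'$ (with $C\subseteq S$) is fixed, each voter gives its point to its favorite candidate in $S$, and by single-peakedness that favorite is one of the two present candidates flanking the voter's peak on the axis. Concretely, a voter whose peak sits at a present candidate votes for it, while a voter whose peak lies strictly between two \emph{consecutive} present candidates $e\spl e'$ (none present strictly between) votes for whichever of $e,e'$ it ranks higher---a fixed feature of the vote. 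I would therefore precompute, for every pair $d\spl d'$, the number $L(d,d')$ of voters whose peak lies strictly between $d$ and $d'$ and who prefer $d$, the analogous $R(d,d')$ of those preferring $d'$, and for each candidate $e$ the number $\mathrm{peak}(e)$ of voters topping $e$; all are polynomial-time computable, and with sentinels at $\pm\infty$ they also describe the votes falling on the leftmost and rightmost present candidates.

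The count itself is a left-to-right sweep along the axis, realized as a dynamic program. Since every candidate of $C$ is forced present, two present candidates are consecutive exactly when no candidate of $C$ lies strictly between them; the only freedom is which candidates of $A$ to insert, and each insertion costs a unit of the budget $k$. I would fix a target value $\sigma$ for $p$'s final Plurality score (looping over $\sigma\in\{0,\dots,\|V\|\}$) and count the sets $A'$ with $\|A'\|\le k$ for which $p$ scores exactly $\sigma$ while every other present candidate scores at most $\sigma-1$; summing over $\sigma$ gives the number of unique-winner configurations. The DP state records the current axis position, the last candidate placed as present together with the part of its score already committed (its peak voters plus the gap voters on its left), and the number of insertions used. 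A transition picks the next present candidate $e'$ to the right of the current $e$ subject to the no-$C$-candidate-strictly-between constraint: at that moment I finalize $e$ by adding $L(e,e')$ and check its constraint (equal to $\sigma$ if $e=p$, at most $\sigma-1$ otherwise), and open $e'$ with committed score $\mathrm{peak}(e')+R(e,e')$. Each state has polynomially many values and each transition is cheap, so the algorithm runs in polynomial time; since the sequence of present candidates determines $A'$ uniquely, nothing is double-counted.

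The main obstacle I anticipate is not the dynamic program but the surrounding bookkeeping: verifying the ``favorite is a flanking candidate'' property with correct treatment of peaks sitting at unadded candidates and at the two axis ends; confirming that the reduction behind Theorem~\ref{thm:d-to-a} really yields single-peaked oracle instances, so that the $\fp$ algorithm for \#CCAC may legitimately be invoked for \#CCDC; and checking that fixing $\sigma$ together with strict inequality for all competitors captures the unique-winner condition with no off-by-one or sentinel errors. As a sanity check, the \#CCDC case can also be argued directly by the same sweep with ``deletion of a candidate of $C-\{p\}$'' replacing ``insertion of a candidate of $A$'', confirming that the two problems share one algorithmic skeleton.
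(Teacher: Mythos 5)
Your argument is correct and is essentially the paper's: the paper proves this theorem simply by invoking its general $k$-Approval candidate-control algorithm (Theorem~\ref{thm:kapp-ccac-sp}) with $k=1$, and that algorithm is precisely your left-to-right sweep along the societal axis over consecutive present candidates, combined with the same reductions (Theorem~\ref{thm:d-to-a} for \#CCDC and Theorem~\ref{thm:c-to-d} for the destructive cases). The only cosmetic difference is that you fix $p$'s score by enumerating a target value $\sigma$, whereas the paper fixes it by enumerating $p$'s immediate present neighbors on the axis (which determines the score via its Lemma~\ref{k-barrier-lemma}); both yield polynomially many cases.
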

\begin{proof}
  This result follows from Theorem~\ref{thm:kapp-ccac-sp} (see the next section) for
  the case of $1$-Approval.
\end{proof}

\subsection{$\boldsymbol{k}$-Approval Voting}
While $k$-Approval is in many respects a simple generalization of the
Plurality rule, it turns out that for $k \geq 2$, for the
unrestricted case, all counting variants of control
problems are intractable for $k$-Approval.  This is quite expected for
candidate control because decision variants of these problems are
$\np$-complete (see the work of
Lin~\cite{lin:thesis:elections,elk-fal-sli:j:cloning} and additionally
of Faliszewski, Hemaspaandra, and
Hemaspaandra~\cite{fal-hem-hem:c:weighted-control}), but is more
intriguing for voter control (as shown by Lin, for $2$-Approval all
voter control decision problems are in $\p$, and, as one can verify,
for $k$-Approval all destructive voter control decision problems are
in $\p$).
On the other hand, for the single-peaked case all the counting
variants of control are polynomial-time solvable for $k$-Approval
(however, note that we rely on $k$ being a constant; our results do
not generalize to the case where $k$ is part of the input).  \bigskip

\noindent
\textbf{Candidate Control.}
\quad
We start be quickly dealing with candidate control in the unrestricted
case.  For the unrestricted case, as in the case of Plurality, we can
use the result of Faliszewski, Hemaspaandra, and
Hemaspaandra~\cite{fal-hem-hem:j:nearly-sp}.

\begin{theorem}
  In the unrestricted case, for each $k$, $k \geq 2$,
  $k$-Approval-\#CCAC and $k$-Approval-\#CCDC are $\sharpp$-complete,
  and $k$-Approval-\#DCAC and $k$-Approval-\#DCDC are
  $\sharpp$-metric-complete.
\end{theorem}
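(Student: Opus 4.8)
The plan is to follow exactly the pattern established for Plurality in Theorem~\ref{thm:placdcc} and Corollary~\ref{cor:placdcdc}, exploiting the fact---flagged in the footnote to Theorem~\ref{thm:placdcc}---that the candidate-control reductions of Faliszewski, Hemaspaandra, and Hemaspaandra~\cite{fal-hem-hem:j:nearly-sp} already work for each fixed $k$ and are parsimonious. Thus essentially no new combinatorial construction is needed; the work is in verifying that the known reductions have the right structural properties and then applying our generic Theorems~\ref{thm:c-to-d} and~\ref{thm:d-to-a}.

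First I would dispatch the two constructive cases, $k$-Approval-\#CCAC and $k$-Approval-\#CCDC. Membership in $\sharpp$ is immediate: each solution (an $A' \subseteq A$ with $\|A'\| \le k$, or a $C' \subseteq C$ with $\|C'\| \le k$ and $p \notin C'$) is a polynomially bounded certificate, and checking that $p$ is the unique $k$-Approval winner of the modified election is a polynomial-time test. For hardness I would invoke \cite[Theorem~6.4]{fal-hem-hem:j:nearly-sp}, which proves $\np$-completeness of the decision variants of CCAC and CCDC for $k$-Approval (each fixed $k \ge 2$) by reducing X3C. The point to confirm is that these reductions are \emph{parsimonious}: the number of size-$k$ exact covers of the X3C instance equals the number of successful control actions in the constructed $k$-Approval instance. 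Granting this, the very same maps reduce \#X3C to $k$-Approval-\#CCAC and to $k$-Approval-\#CCDC, and since \#X3C is $\sharpp$-parsimonious-complete, so are both counting problems (recall that in this paper ``$\sharpp$-complete'' abbreviates ``$\sharpp$-parsimonious-complete'').

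The two destructive cases then follow purely formally from Theorem~\ref{thm:c-to-d}. Since $k$-Approval-\#CCAC is $\sharpp$-complete, every $\sharpp$ problem parsimoniously---hence metric---reduces to it; composing with the metric reduction from $k$-Approval-\#CCAC to $k$-Approval-\#DCAC supplied by Theorem~\ref{thm:c-to-d}, and using transitivity of metric reductions, every $\sharpp$ problem metric-reduces to $k$-Approval-\#DCAC. As $k$-Approval-\#DCAC is itself in $\sharpp$ (a certificate that $p$ is \emph{not} the unique winner is again poly-time checkable), it is $\sharpp$-metric-complete. The identical argument, applied through the $k$-Approval-\#CCDC hardness already obtained, yields $\sharpp$-metric-completeness of $k$-Approval-\#DCDC.

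The only genuine obstacle is the first step: confirming that the FHH candidate-control reductions are truly parsimonious and carry over verbatim for every fixed $k \ge 2$ rather than just for $k=1$. This is a matter of inspecting their constructions---checking that the padding candidates and voters they introduce neither destroy nor create solutions, so that each exact cover corresponds to exactly one valid control action---rather than of inventing anything new. Once that bookkeeping is settled, the remainder of the theorem is a mechanical application of Theorem~\ref{thm:c-to-d} together with the transitivity of the parsimonious and metric reducibilities.
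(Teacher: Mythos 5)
Your proposal is correct and matches the paper's own proof essentially verbatim: the paper likewise establishes the constructive cases by observing that the parsimonious X3C reductions of Faliszewski, Hemaspaandra, and Hemaspaandra~\cite[Theorem~6.4]{fal-hem-hem:j:nearly-sp} carry over to $k$-Approval for each fixed $k$, and then obtains the destructive cases via Theorem~\ref{thm:c-to-d}. The extra detail you give on membership in $\sharpp$ and on transitivity of metric reductions is sound but not a departure from the paper's argument.
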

\begin{proof}
  For the constructive variants, the same approach as in the proof of
  Theorem~\ref{thm:placdcc} works: The hardness proofs given by
  Faliszewski, Hemaspaandra, and
  Hemaspaandra~\cite[Theorem~6.4]{fal-hem-hem:j:nearly-sp} apply to
  the case of $k$-Approval as well.  For the destructive variants, we
  invoke Theorem~\ref{thm:c-to-d}.
\end{proof}

For the single-peaked case, we use a dynamic-programming approach. Our
algorithm is a very extensive generalization of the algorithm for the
single-peaked variant of Plurality-CCAC given by Faliszewski et
al.~\cite{fal-hem-hem-rot:j:single-peaked-preferences}

\begin{theorem}\label{thm:kapp-ccac-sp}
  For the case of single-peaked profiles, for each fixed positive $k$,
  $k$-Approval-\#CCAC, $k$-Approval-\#CCDC, $k$-Approval-\#DCAC, and
  $k$-Approval-\#DCDC are in $\fp$.
\end{theorem}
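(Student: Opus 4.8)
The plan is to prove only the single statement $k$-Approval-\#CCAC $\in\fp$ for single-peaked profiles and to obtain the other three cases for free. By Theorem~\ref{thm:d-to-a}, $k$-Approval-\#CCDC Turing reduces to $k$-Approval-\#CCAC, and the instances $J_g=(\{p\},C-\{p\},V,p,g)$ produced there live on the same candidate universe $C$ and the same voter collection $V$, so they remain single-peaked with respect to the given axis; by Theorem~\ref{thm:c-to-d} each destructive variant metric reduces to its constructive counterpart on the \emph{same} instance, hence also stays single-peaked. Thus an \fp\ algorithm for $k$-Approval-\#CCAC yields all four. (To avoid a clash with the ``$k$'' of $k$-Approval I write $\ell$ for the control budget throughout, and I denote the $k$-Approval score of a candidate $c$ in the current election by $\sigma(c)$.)

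The engine is a locality fact that I would derive from Definition~\ref{def:sp}. For a single-peaked vote every top-$t$ set is an interval of the axis, and therefore, for each candidate $c$, the set of candidates the voter ranks \emph{above} $c$ is a one-sided block of consecutive axis-neighbours of $c$ (all immediately to the left of $c$, or all immediately to the right). Since under $k$-Approval a voter approves $c$ exactly when at most $k-1$ \emph{present} candidates beat $c$, this block structure gives the crucial locality: whether a voter approves a present candidate $c$ is decided by the presence pattern of at most the $k$ nearest present candidates on each side of $c$. Indeed, if at least $k$ present candidates on the relevant (better) side beat $c$, then $c$ is rejected no matter what lies farther out; otherwise the better-block has already ended among those $k$ present candidates, so the count is final. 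Scanning the axis left to right, the full score $\sigma(c)$ of any present $c$ thus becomes determined as soon as the next $k$ present candidates to its right have been fixed, and it can be computed by a single pass over the $n$ voters, testing each voter's local approval condition against the identities of the nearby present candidates. For $k=1$ this specialises to the single-peaked Plurality-CCAC algorithm of Faliszewski et al.~\cite{fal-hem-hem-rot:j:single-peaked-preferences}.

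This motivates a dynamic program over the axis positions of $C\cup A$. A state records the current position, the number of added candidates used so far ($0,\dots,\ell$), and the identities of the last $O(k)$ present candidates---enough to furnish, for the candidate being finalised, its $k$ nearest present neighbours on either side. Candidates of $C$ are forced present (free), candidates of $A$ are optional (each costing one budget unit). Whenever a candidate $c$ slips $k$ present candidates into the past we finalise $\sigma(c)$ as above. To enforce that $p$ is the \emph{unique} winner I would loop over the intended value $s^{\ast}=\sigma(p)\in\{0,\dots,n\}$: inside the run for $s^{\ast}$, every finalised $c\ne p$ must satisfy $\sigma(c)\le s^{\ast}-1$ (otherwise the branch is discarded) and $p$ must finalise to exactly $s^{\ast}$; summing the accepting counts over all $s^{\ast}$ gives $k$-Approval-\#CCAC$(I)$. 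Because $k$ is a constant, the look-behind window ranges over $|C\cup A|^{O(k)}$ identity tuples, so the number of states is polynomial; each of the polynomially many transitions spends $O(nk)$ time to finalise a score, and the outer loop contributes a factor $n+1$. Hence the procedure runs in polynomial time.

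The hard part is the locality lemma of the second paragraph: proving that a number of \emph{present} neighbours bounded in $k$ suffices on each side to decide every voter's approval, even though arbitrarily many \emph{absent} candidates may separate them. The quantity that must be bounded is ``present-order'' distance, not axis distance, and it is exactly this that keeps the look-behind component of the state of size $O(k)$ and the state space polynomial. The only other delicate point is the unique-winner comparison against the global quantity $\sigma(p)$, and the loop over $s^{\ast}$ is what converts it into a purely local, per-candidate test that the dynamic program can check the moment a candidate is finalised.
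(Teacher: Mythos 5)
Your proposal is correct and follows essentially the same route as the paper: the same reductions handing \#CCDC and the destructive cases off to \#CCAC, the same key locality lemma (the paper's Lemma~\ref{k-barrier-lemma}, i.e., that a present candidate's $k$-Approval score is determined once the $k$ nearest present candidates on each side of it along the axis are fixed), and the same dynamic program sliding a window of $O(k)$ present candidates along the societal axis while finalising scores and pruning branches that reach $p$'s score. The only cosmetic differences are that you pin down $p$'s score by looping over its numeric value $s^{\ast}$, whereas the paper pins it down by enumerating the $O(\|C \cup A\|^{2k})$ possible $2k$-candidate neighbourhoods of $p$, and that the paper pads the ends of the axis with dummy candidates to dispose of the boundary cases you leave implicit.
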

\begin{proof}
  We give the proof for $k$-Approval-\#CCAC only. The result for
  $k$-Approval-\#CCDC follows by applying Theorem~\ref{thm:d-to-a},
  and the destructive cases follow by applying
  Theorem~\ref{thm:c-to-d}.

  We are given a candidate set $C$, a voter collection $V$, a
  designated candidate $p \in C$, a collection $A$ of unregistered
  candidates, a positive integer $\ell$, and a societal axis
  $\mathrel{L}$ over $C \cup A$.  We show a polynomial time algorithm
  for determining the number of sets $A'$, $A' \subseteq A$, where
  $\|A'\| \le \ell$, such that $p$ is the unique $k$-Approval winner
  in election $(C \cup A', V)$. We assume that $\|C \cup A\| > 4k$
  (otherwise we solve our problem by enumerating all possible solutions).

  Intuitively, the idea of our algorithm is as follows: First, we fix
  two groups of $k$ candidates, one ``to the left'' of $p$ (with
  respect to $L$) and one ``to the right'' of $p$.  We will argue that
  each choice of such ``neighborhood'' of $p$ fixes $p$'s score.
  Second, with $p$'s score fixed, we count how many ways are there to
  add candidates so that every candidate has fewer points than $p$.
  We sum these values over all choices of $p$'s ``neighborhood.''

  \newcommand{\before}{{{\mathrm{before}}}}
  \newcommand{\after}{{{\mathrm{after}}}}

  Let us introduce some useful notation. For each set $X \subseteq C
  \cup A$ of candidates we define $\before(X)$ to be the subset of
  those candidates in $C \cup A$ that precede each member of $X$ with
  respect to $L$ (i.e., $\before(X) = \{ d \in C \cup A \mid (\forall
  x \in X)[d \spl x]\}$). Analogously, we define $\after(X)$ to be the
  subset of those candidates in $C \cup A$ that succeed all members of
  $X$ with respect to $L$ (i.e., $\after(X) = \{ d \in C \cup A \mid
  (\forall x \in X)[x \spl d]\}$). The next lemma describes how we can
  fix candidates' scores by fixing their ``neighborhoods.''

  \begin{lemma}
    \label{k-barrier-lemma}
    For each set $H \subseteq C \cup A$, $\|H\| = k$, each set $X
    \subseteq \before(H)$ and each candidate $c \in X$, if $z$ is the score
    of candidate $c$ in k-Approval election $(X \cup H, V)$, then for
    each $Y \subseteq \after(H)$ in k-Approval election $(X \cup H \cup Y,
    V)$ the score of candidate $c$ is $z$ as well.
  \end{lemma}

  \begin{proof}
    For a given subset $Y \subseteq \after(H)$ and candidate $c \in
    X$, suppose that the score of candidate $c$ in $k$-Approval election
    $(X \cup H \cup Y, V)$ is $z' \ne z$.  Adding a candidate can
    never increase the score of a candidate already present, so we
    conclude that $z' < z$.  It means that there is a voter $v \in V$
    from which $c$ gains a point in election $E_1 = (X \cup H,V)$ but
    not in election $E_2 = (X \cup H \cup Y, V)$.  Thus there is a
    candidate $c' \in Y$ that receives a point from $v$.  Since $c$ is
    among $v$'s top $k$ most preferred candidates across $X \cup H$
    and $\|H\| = k$, there must exist a candidate $c'' \in H$ that is
    less preferred than $c$ by $v$.  Since $v$ prefers $c'$ over $c$,
    it must be that $v$ prefers $c'$ over $c''$. However, we know that
    $c \mathrel{L} c'' \mathrel{L} c'$, because $c \in X$, $c'' \in H$
    and $c' \in Y$.  Since $v$ is single-peaked with respect to 
    $L$, $v$ cannot prefer both $c$ and $c'$ over $c''$; a contradiction.
  \end{proof}

  Let $B_L$ be the set of the first $2k$ candidates from $C \cup A$
  (with respect to $L$) and let $B_R$ be the set of the last $2k$
  candidates from $C \cup A$ (with respect to $L$).  Without loss of
  generality, we assume that $B_L$ and $B_R$ contain candidates from
  $C$ only, and that each voter has a preference order of the form $C
  - (B_L \cup B_R) > B_L > B_R$ (if this were not the case then we
  could add to $C$ two groups of $2k$ dummy candidates that all the
  voters rank last, respecting the above requirement, and that are at
  the extreme ends of the societal axis $L$).  Note that by our
  assumptions $p$ does not belong to $B_L \cup B_R$ and for each $A''
  \subseteq A$ it holds that in election $(C \cup A'', V)$ the
  candidates from $B_R$ receive $0$ points each.

  We say that a set $Y \subseteq C \cup A$ is well-formed if for each
  candidate $d \in C$ it holds that $d \in \before(Y) \cup Y \cup
  \after(Y)$ (in other words, $Y$ is well-formed if it is an interval
  with respect to the societal axis $L$ restricted to $C \cup Y$).  We
  define $K_L$ to be a collection of subsets of $\before(\{p\})$ such
  that if $Y \in K_L$ then $Y\cup\{p\}$ is well-formed and $\|Y\| = k$.
  (Intuitively, $K_L$ is the family of sets of candidates that can
  form the ``left part'' of $p$'s neighborhood.)  We define $K_R$
  analogously, but ``to the right of $p$.'' That is, $K_R$ is a
  collection of subsets of $\after(\{p\})$ such that if $Y \in K_R$
  then $\{p\} \cup Y$ is well-formed and $\|Y\| = k$.
  Note that there are at most $O(\|C \cup A\|^k)$ sets in each of
  $K_L$ and $K_R$.

  Our algorithm proceeds as follows.  In a loop, we try each $Y_L \in
  K_L$ and each $Y_R \in K_R$. For each choice of $Y_L$ and $Y_R$, we
  set $Y = Y_L \cup \{p\} \cup Y_R$ to be the neighborhood of $p$, and
  we set $\rho$ to be the $k$-Approval score of $p$ in election
  $(Y,V)$. (If $\|Y \cap A\| > \ell$ then we drop this choice of $Y_L$
  and $Y_R$ because picking this neighborhood requires adding more
  candidates than we are allowed to.) We create two new sets, $C' = C
  \cup Y$, $A' = A \cap (\before(Y) \cup \after(Y))$, and an integer
  $\ell' = \ell - \|Y \cap A\|$. Finally, we compute the number $s(Y)$
  of size-at-most-$\ell'$ subsets $A''$ of $A'$ such that in
  election $(C' \cup A'',V)$ each candidate has less than $\rho$
  points (except for $p$ who, by Lemma~\ref{k-barrier-lemma}, has
  exactly $\rho$ points). Computing $s(Y)$ in polynomial time is a
  crucial technical part of the algorithm and we describe it below. We
  sum up all the values $s(Y)$ and return them as our output. By
  Lemma~\ref{k-barrier-lemma} it is easy to see that this strategy is
  correct.

  We now describe how to compute $s(Y)$ in polynomial time.  Let us
  consider a situation where we have picked $p$'s neighborhood $Y$ and
  computed $\rho$, $C'$, $A'$, and $\ell'$. If $\rho = 0$ then $p$
  cannot be the unique winner so we assume that $\rho > 0$.
  From now on, we assume that $C'$ and $A'$ take the roles of $C$ and
  $A$ in the definition of a well-formed set.  

  \newcommand{\prev}{{{\mathrm{prev}}}}
  \newcommand{\Prev}{{{\mathrm{Prev}}}}

  For each well-formed set $Z = \{z_1, \ldots, z_{2k}\} \subseteq C'
  \cup A'$, such that $z_1 \spl \cdots \spl z_{2k}$, and for each
  nonnegative integer $s$ we define:
  \begin{enumerate}
  \item[] $f(Z,s) = $ the number of sets $A'' \subseteq (A' \cap
    \before(Z))$ such that $\|A''\| \leq s$ and in election $(Z \cup
    A'' \cup (C' \cap \before(Z)))$ each candidate in $\{z_1, \ldots,
    z_k\} \cup A'' \cup (C' \cap \before(Z))$ other than $p$ (if $p$
    is included in this set) has fewer than $\rho$
    points.\footnote{Yes, we really mean the condition on scores to
      apply to candidates $z_1, \ldots, z_k$ but not to candidates
      $z_{k+1}, \ldots, z_{2k}$; we will implicitly ensure that the
      scores of $z_{k+1}, \ldots, z_{2k}$ do satisfy the condition as
      well.}
  \end{enumerate}
  It is easy to see that $s(Y)$ is simply $f(B_R,\ell')$ (recall that
  $B_R$ is the set of ``right-hand side'' dummy candidates, who have score
  $0$ in every election). For each $Z$ and $s$, we can compute
  $f(Z,s)$ using dynamic programming. To provide appropriate recursive
  expression for $f$, we need some additional notation.  We define
  $\prev(Z)$ to be the last candidate from $C'$ with respect to $L$
  that precedes the candidates from $Z$. That is $\prev(Z)$ is the
  maximal (``rightmost'') element of $C' \cap \before(Z)$ with respect
  to $L$.  We define $\Prev(Z)$ to be the set that contains $\prev(Z)$
  and all the candidates from $C' \cup A'$ that are between $\prev(Z)$
  and $Z$, with respect to $L$. For a well-formed set $Z = \{z_1,
  \ldots, z_{2k}\}$ and an element $z \in \Prev(Z)$ (provided that
  $\Prev(Z)$ is defined) we let $\delta(Z,z)$ be $1$ if in election
  $(\{z,z_1,\ldots, z_{2k}\},V)$ candidate $z_k$ is either $p$ or
  obtains fewer than $\rho$ points. Otherwise we set $\delta(Z,z) =
  0$.  Now, for each nonnegative integer $s$ and each well-formed set
  $Z = \{z_1, \ldots, z_{2k}\} \subseteq C' \cup A'$ such that $z_1
  \spl \cdots \spl z_{2k}$ the following relation holds ($\chi_{A'}$
  is the characteristic function of set $A'$, i.e., $\chi_{A'}(z)$ is
  $1$ if $z \in A'$ and is $0$ otherwise):
  \[ f(Z,s) = \left\{ \def\arraystretch{1.5}
    \begin{array}{l}
      \displaystyle\sum_{z \in \Prev(Z)} \delta(Z,z)f(\{z,z_1,\ldots,z_{2k-1}\}, s-\chi_{A'}(z))  \text{, if $\Prev(Z)$ is defined,} \\
      \quad 1  \text{, otherwise.}
    \end{array} \right.
  \]
  Note that if $\Prev(Z)$ is not defined then $Z = B_L$ and by
  single-peakedness of the election, $z_1, \ldots, z_k$ have zero
  points each.

  The correctness of this recursive expression follows by our
  assumptions and by Lemma~\ref{k-barrier-lemma}. It is easy to see
  that using dynamic programming and this recursive expression we can
  compute $f(B_R,\ell')$ in polynomial time. This concludes the proof.
\end{proof}

\noindent
\textbf{Voter Control.}  \quad We now turn to the case of voter
control and we start with the unrestricted case. This time, we need
quite a few new ideas: Under $k$-Approval (for fixed $k$, $k \geq 2$)
all types of counting voter control are hard, while the complexity of
decision variants is quite varied (see the works of
Lin~\cite{lin:thesis:elections,elk-fal-sli:j:cloning} and of Faliszewski, Hemaspaandra, and
Hemaspaandra~\cite{fal-hem-hem:c:weighted-control}).

We start by considering $2$-Approval-\#CCAV and $2$-Approval-\#CCDV
separately, then we extend these results to $k \geq 3$, and finally we
invoke Theorem~\ref{thm:c-to-d} to obtain the results for the
destructive cases.

\begin{theorem}\label{thm:ccavdv-2}
  In the unrestricted case,
  $2$-Approval-\#CCAV is $\sharpp$-Turing-complete and
  $2$-Approval-\#CCDV is $\sharpp$-metric-complete.
\end{theorem}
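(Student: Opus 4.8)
The plan is to reduce \#PerfectMatching---which the excerpt records as $\sharpp$-metric-complete---to $2$-Approval-\#CCDV by a \emph{parsimonious} reduction, and then to obtain $2$-Approval-\#CCAV from $2$-Approval-\#CCDV through Theorem~\ref{thm:d-to-a}. Both problems lie in $\sharpp$: a nondeterministic machine can guess the deleted (resp.\ added) voter set and then check in polynomial time that $p$ is the unique $2$-Approval winner of the resulting election, so each problem counts the accepting computations of a polynomial-time nondeterministic machine. Since a parsimonious reduction is a special metric reduction and metric reductions compose, a parsimonious reduction \emph{from} the metric-complete problem \#PerfectMatching yields $\sharpp$-metric-hardness of $2$-Approval-\#CCDV (it does \emph{not} upgrade to parsimonious-completeness, because \#PerfectMatching itself is known only to be metric-complete); together with membership this gives $\sharpp$-metric-completeness.

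For the reduction, let $G=(X,Y,E)$ with $\|X\|=\|Y\|=t$ be an instance of \#PerfectMatching, write $d_v$ for the degree of $v\in X\cup Y$, and set $\Delta=\max_v d_v$. I build a $2$-Approval election whose candidate set is $\{p\}$, one candidate $c_v$ for every vertex $v\in X\cup Y$, and a pool of private dummy candidates. The registered voters $V$ are: (i) for each edge $e=\{x,y\}\in E$, an \emph{edge-voter} approving exactly $c_x$ and $c_y$; (ii) for each vertex $v$, exactly $\Delta-d_v$ \emph{filler} voters, each approving $c_v$ together with a fresh private dummy; and (iii) exactly $\Delta$ voters approving $p$ together with fresh private dummies. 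Then every $c_v$ has score exactly $\Delta$, candidate $p$ has score exactly $\Delta$, and every dummy has score at most $1$. The deletion bound is $k=t$.

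The key claim is that the deletion sets $V'\subseteq V$ with $\|V'\|\le t$ for which $p$ is the unique winner of $(C,V-V')$ are \emph{exactly} the perfect matchings of $G$ (identified with their edge-voters). If no voter of type~(iii) is deleted then $p$ keeps score $\Delta$, so each $c_v$ must lose at least one approval; that is, the deleted voters must \emph{cover} all $2t$ candidates $c_v$. An edge-voter covers two of them and a filler covers one, so covering all $2t$ needs at least $t$ deletions, with equality forcing $t$ pairwise vertex-disjoint edge-voters---a perfect matching. Conversely, deleting the edge-voters of a perfect matching drops every $c_v$ to $\Delta-1<\Delta$ while leaving $p$ at $\Delta$, so $p$ wins uniquely. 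It remains to rule out ``cheating'': deleting any type-(iii) voter lowers $p$ below $\Delta$, after which each $c_v$ would have to drop by at least $2$, demanding strictly more than $t$ deletions; and any within-budget set that uses a filler covers at most $2(t-1)+1<2t$ of the $c_v$. Hence no valid deletion set touches a filler or a $p$-voter, and the number of solutions equals the number of perfect matchings, so the reduction is parsimonious.

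With $2$-Approval-\#CCDV shown $\sharpp$-metric-hard (hence $\sharpp$-Turing-hard), Theorem~\ref{thm:d-to-a} gives a Turing reduction from $2$-Approval-\#CCDV to $2$-Approval-\#CCAV, so by transitivity every $\sharpp$ problem Turing-reduces to $2$-Approval-\#CCAV; with membership this yields $\sharpp$-Turing-completeness. The two reducibility types in the statement are precisely what this route delivers: metric for \#CCDV (inherited from the metric-complete source across a parsimonious step) and only Turing for \#CCAV (the passage through Theorem~\ref{thm:d-to-a} is a Turing reduction). The main obstacle is the exactness of the count: the budget $k=t$ must be tuned so that the ``two-for-one'' coverage of edge-voters makes every non-matching deletion either over-budget or winner-spoiling, so that the number of solutions is \emph{exactly} the number of perfect matchings rather than merely positive precisely when one exists.
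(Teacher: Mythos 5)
Your proof is correct, and the \#CCDV half is essentially the paper's own construction: vertex candidates all at score $\Delta$ (the paper's $D$), degree-equalizing filler voters and $\Delta$ voters for $p$ padded with low-scoring dummies, deletion budget equal to the number of vertices on one side, and the same covering argument showing that the only within-budget deletion sets are exactly the edge-voter sets of perfect matchings; the paper realizes the dummies as a shared pool $B$ with each $b_t$ scoring exactly $1$ rather than as fresh private dummies, which is immaterial. (One small point you leave implicit: you need $\Delta \geq 2$ so that $p$'s post-deletion score of $\Delta$ strictly beats the dummies' score of $1$; the paper disposes of this with an explicit w.l.o.g., harmless since counting perfect matchings in max-degree-$1$ graphs is trivial.) Where you genuinely diverge is \#CCAV: the paper gives a second, direct Turing reduction --- two registered voters give $p$ score $2$ via padding candidates $b_1,b_2$, the unregistered voters are the edges, and two oracle calls with budgets $n$ and $n-1$ count matchings of size at most $n$ and at most $n-1$, whose difference is the number of perfect matchings --- whereas you route hardness through \#CCDV and Theorem~\ref{thm:d-to-a}. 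Your route is sound: Theorem~\ref{thm:d-to-a} goes in the right direction (\#CCDV reduces \emph{to} \#CCAV), and the composite of a metric, a parsimonious, and a Turing reduction is a Turing reduction, so \#P-Turing-hardness transfers; it is also more economical, needing only one gadget. The paper's direct construction is self-contained and its two-call subtraction trick is the standard device for isolating exact-size solutions. Both routes deliver precisely the reducibility types claimed in the statement: metric for \#CCDV, Turing for \#CCAV.
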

\begin{proof}
  We first consider $2$-Approval-\#CCAV.
  We give a Turing reduction from \#PerfectMatching to
  $2$-Approval-\#CCAV.  Let $G = (G(X),G(Y),G(E))$ be our input
  bipartite graph, where $G(X) = \{x_1, \ldots, x_n\}$ and $G(Y) =
  \{y_1, \ldots, y_n\}$ are sets of vertices, and $G(E) = \{e_1,
  \ldots, e_m\}$ is the set of edges. We form an election $E = (C,V)$
  and a collection $W$ of unregistered voters as follows. We set $C =
  \{p,b_1,b_2\} \cup G(X) \cup G(Y)$ and we let $V = (v_1,v_2)$, where
  $v_{1}$ has preference order $p > b_1 > C - \{p,b_1\}$ and $v_{2}$
  has preference order $p > b_2 > C - \{p,b_2\}$. We let $W = (w_1,
  \ldots, w_m)$, where for each $\ell$, $1 \leq \ell \leq m$, if
  $e_\ell = \{x_i,y_j\}$ then $w_\ell$ has preference order $x_i > y_j
  > C - \{x_i,y_j\}$.

  Thus, in election $E$ candidate $p$ has score $2$, candidates $b_1$
  and $b_2$ have score $1$, and candidates in $G(X) \cup G(Y)$ have score
  $0$. We form an instance $I$ of $2$-Approval-\#CCAV with election $E
  = (C,V)$, collection $W$ of unregistered voters, designated
  candidate $p$, and the number of voters that can be added set to
  $n$. We form instance $I'$ to be identical, except we allow to add
  at most $n-1$ voters.

  It is easy to verify that the number of $2$-Approval-\#CCAV
  solutions for $I$ (for $I'$) is the number of matchings in $G$ of
  cardinality at most $n$ (the number of matchings in $G$ of
  cardinality at most $n-1$). (Each unregistered voter corresponds to an
  edge in $G$ and one cannot add two edges that share a vertex as then
  $p$ would no longer be the unique winner.) The number of
  perfect matchings in $G$ is exactly the number of solutions for $I$ minus
  the number of solutions for $I'$.\bigskip

  Let us now consider the case of $2$-Approval-\#CCDV. We give a
  metric reduction from \#PerfectMatching. As before, let $G =
  (G(X),G(Y),G(E))$ be our input bipartite graph, where $X = \{x_1,
  \ldots, x_n\}$ and $Y = \{y_1, \ldots, y_n\}$ are sets of vertices,
  and $E = \{e_1, \ldots, e_m\}$ is the set of edges.  For each vertex
  $v$ of $G$, we write $d(v)$ to denote $v$'s degree.  We set $D =
  \max\{d(v) \mid v \in X \cup Y\}$ and we set $T = \sum_{v \in X \cup
    Y}(D-d(v))$. W.l.o.g. we assume that $D \geq 2$.  We form an
  election $E = (C,V)$ as follows.  We set $C = \{p\} \cup X \cup Y
  \cup B$, where $B = \{b_1, \ldots, b_{T+D}\}$.  We form the
  collection $V = V_E + A_{X,Y} + A_p$ of voters as follows:
  \begin{enumerate}\fixenumerate
  \item We set $V_E = (v_1, \ldots, v_m)$ and for each $e_\ell =
    \{x_i, y_j\} \in E$, we set the preference order of $v_\ell$ to be
    $x_i > y_j > C - \{x_i,y_j\}$.
  \item We set $A_{X,Y} = (a_1, \ldots, a_T)$ and we set their
    preference orders so that for each $v \in X \cup Y$, $A_{X,Y}$
    contains exactly $D - d(v)$ voters with preference orders of the
    form $v > b_t > C - \{v,b_t\}$ (see Item \ref{chooseb}. below regarding how
    candidates $b_t$ are chosen).
  \item We set $A_p = \{a_{T+1}, \ldots, a_{T+D}\}$, where each voter
    $a_{T+i}$, $1 \leq i \leq D$, has preference order of the form $p
    > b_t > C - \{p,b_t\}$ (see Item \ref{chooseb}. below regarding how
    candidates $b_t$ are chosen).
  \item\label{chooseb} We arrange the preference orders of voters in $A_{X,Y} + A_p$
    so that each candidate $b_t$, $1 \leq t \leq T+D$, receives
    exactly $1$ point.
  \end{enumerate}
  Thus, before deleting any of the voters, each candidate in $\{p\}
  \cup X \cup Y$ has score $D \geq 2$ and each candidate in $B$ has
  score $1$. We form instance $I$ of $2$-Approval-\#CCDV consisting of
  $E = (C,V)$, designated candidate $p$, and with $n$ as the limit on
  the number of voters we are allowed to delete.

  We claim that the number of solutions for $I$ is equal to the number
  of perfect matchings in $G$. Let $M \subseteq E$ be a perfect
  matching in $G$. We form collection $V' = \{v_i \mid e_i \in
  M\}$. Clearly, $\|V'\| \leq n$ and in election $E' = (C,V-V')$
  candidate $p$ is the unique winner ($p$ has $D$ points, candidates
  in $X \cup Y$ have $D-1$ points, and candidates in $B$ have $1$
  point). 
 
  On the other hand, let $V'$ be a subcollection of $V$ such that
  $\|V'\| \leq n$ and $p$ is the unique winner of election $E' =
  (C,V-V')$. Since $p$ is the unique winner of $E'$, it must hold that
  each of the $2n$ candidates in $X \cup Y$ has at most $D-1$ points
  in $E'$.  Thus, since $|V'| \leq n$, it must be the case that in
  fact $\|V'\| = n$ and $V'$ is a subcollection of $V_E$ such that
  each candidate from $X$ appears in the first position of exactly one
  vote in $V'$ and each candidate from $Y$ appears in the second
  position of exactly one vote from $V'$. As a result, $E' = \{e_i
  \mid v_i \in V'\}$ is a perfect matching for $G$.
\end{proof}

\begin{theorem}
  In the unrestricted case, for each $k \geq 3$, $k$-Approval-\#CCAV
  and $k$-Approval-\#CCDV are $\sharpp$-metric-complete.
\end{theorem}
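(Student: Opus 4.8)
The plan is to establish membership in $\sharpp$ and then give metric reductions from \#PerfectMatching, which was noted above to be $\sharpp$-metric-complete. Membership is routine: a nondeterministic machine can guess a set $W'$ (respectively $V'$) of voters of the allowed size and verify in polynomial time that $p$ is the unique $k$-Approval winner of the resulting election, so both counting problems lie in $\sharpp$. For hardness, my idea is to reuse the bipartite-graph encodings from the proof of Theorem~\ref{thm:ccavdv-2}, now filling the extra approval slots of every vote with private dummy candidates (a fresh block of dummies per vote) that each receive at most one point and hence can never threaten $p$; this forces the candidates in $X \cup Y$ and $p$ to behave exactly as they do in the $2$-Approval constructions.

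For $k$-Approval-\#CCDV this padding is essentially all that is needed. The $2$-Approval-\#CCDV reduction of Theorem~\ref{thm:ccavdv-2} is already parsimonious, since its solutions are in bijection with the perfect matchings of $G$, and adding $k-2$ private dummies per vote does not change the score analysis: the dummies stay at score at most $1$, which is below $p$'s score $D \geq 2$, so no deletion can make a dummy a winner. Thus the padded construction remains a metric (indeed parsimonious) reduction, and composing it with the reduction of an arbitrary $\sharpp$ problem to \#PerfectMatching yields $\sharpp$-metric-completeness.

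The main obstacle is $k$-Approval-\#CCAV, because for $2$-Approval we obtained only a Turing reduction (two oracle calls, used to subtract the matchings of size at most $n-1$ from those of size at most $n$), and naive dummy-padding would reproduce exactly that difficulty. To upgrade this to a single-call metric reduction I would exploit the genuinely new third approval slot available when $k \geq 3$. Concretely, I would let each edge voter for $e_\ell = \{x_i,y_j\}$ approve $x_i$, $y_j$, and $p$ (together with $k-3$ private dummies), so that adding $t$ edge voters raises $p$'s score by exactly $t$. I would then install, via registered voters, a threshold candidate $q$ whose fixed score equals $p$'s base score plus $n-1$, choose each vertex candidate's base score so that any vertex incident to two added edges reaches the score that $p$ attains when exactly $n$ voters are added, and cap the number of addable voters at $n$. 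A short case analysis then shows that $p$ is the unique winner precisely when the added voters form a perfect matching: the cap forces at most $n$ added edges, candidate $q$ forces at least $n$, and the vertex-score constraint forces the chosen edges to be disjoint, so that exactly $n$ disjoint edges remain, i.e., a perfect matching.

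With this construction the number of $k$-Approval-\#CCAV solutions equals the number of perfect matchings of $G$ exactly, so the reduction is parsimonious; composed with the merely metric reduction of an arbitrary $\sharpp$ problem to \#PerfectMatching it gives $\sharpp$-metric-hardness, and together with membership in $\sharpp$ this yields $\sharpp$-metric-completeness. The step I expect to require the most care is the CCAV case analysis---in particular, checking that no configuration with fewer than $n$ added edges, and no non-matching configuration, can leave $p$ as the unique winner once the threshold candidate and the vertex-score offsets are fixed.
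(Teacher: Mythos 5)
Your proposal is correct and follows essentially the same route as the paper: the \#CCDV case is handled by padding the $2$-Approval construction with per-vote dummy candidates, and the \#CCAV case uses the third approval slot to let each edge voter approve $p$ together with its two endpoints, plus a threshold candidate of score $n-1$ and vertex candidates of base score $n-2$, so that $p$ wins uniquely exactly when the added voters form a perfect matching. This is precisely the paper's construction (its candidate $d$ is your $q$), yielding a parsimonious reduction from \#PerfectMatching and hence $\sharpp$-metric-completeness.
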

\begin{proof}
  The proof for the case of \#CCDV follows by applying a natural
  padding argument in the reduction from the proof of
  Theorem~\ref{thm:ccavdv-2}.  Thus we focus on the case of \#CCAV.

  We first give a proof for the case $k=3$.
  We give a Turing reduction from \#PerfectMatching to
  $3$-Approval-\#CCAV.  Let $G = (G(X),G(Y),G(E))$ be our input
  bipartite graph, where $G(X) = \{x_1, \ldots, x_n\}$ and $G(Y) =
  \{y_1, \ldots, y_n\}$ are sets of vertices, and $G(E) = \{e_1,
  \ldots, e_m\}$ is the set of edges. We form an election $E = (C,V)$
  and a collection $W$ of unregistered voters as follows. We set $C =
  \{p,d\} \cup B \cup G(X) \cup G(Y)$ (where $B$ is a collection of dummy
  candidates to be specified later) and we set $V = (v_1,\ldots, v_t)$,
  where the preference orders of the voters in $V$ are such that 
  the score of $p$ is $0$, the score of $d$ is $n-1$, the score
  of each candidate in $G(X) \cup G(Y)$ is $n-2$, and the score of
  each dummy candidate in $B$ is either $1$. 
  We set $W = (w_1, \ldots, w_m)$, where for each $\ell$, $1 \leq \ell \leq m$, if
  $e_\ell = \{x_i,y_j\}$ then $w_\ell$ has preference order $p > x_i > y_j
  > C - \{p,x_i,y_j\}$.

  We form an instance $I$ of $3$-Approval-\#CCAV with election $E
  = (C,V)$, collection $W$ of unregistered voters, designated
  candidate $p$, and the number of voters that can be added set to
  $n$. 

  It is easy to verify that the number of $3$-Approval-\#CCAV
  solutions for $I$ is the number of matchings in $G$ of cardinality
  exactly $n$. (One has to add exactly $n$ voters for $p$ to defeat
  $d$; each unregistered voter corresponds to an edge in $G$ and one
  cannot add two edges that share a vertex as then the candidate
  corresponding to that vertex would have score $n$, and $p$ would not
  be able to become the unique winner.)  Thus, the number of perfect
  matchings in $G$ is exactly the number of solutions for $I$. 
  Further, clearly it is possible to implement our reduction in 
  polynomial time.

   The case for $k > 3$ follows by natural padding arguments and 
   extending the dummy-candidates set $B$.
\end{proof}

Finally, we obtain the results for the destructive cases through
Theorem~\ref{thm:c-to-d} (we remark that it applies to
$2$-Approval-\#CCAV as well because Turing reductions are
generalizations of metric reductions; for the same reason in
Table~\ref{tab:results} we report the complexity for
$k$-Approval-\#CCAV and $k$-Approval-\#CCDV, $k \geq 2$, as
``$\sharpp$-Turing-complete,'' even though our results for $k \geq 3$
are slightly more precise).

\begin{corollary}\label{cor:dc-approval}
  In the unrestricted case, $2$-Approval-\#DCAV is
  $\sharpp$-Turing-complete and for each $k$, $k \geq 2$,
  $(k+1)$-Approval-\#DCAV and $k$-Approval-\#DCDV are
  $\sharpp$-metric-complete.
\end{corollary}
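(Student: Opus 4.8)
The plan is to derive everything from Theorem~\ref{thm:c-to-d}, which states that each constructive counting control problem metric-reduces to its destructive counterpart (and conversely), combined with the completeness results already established for the constructive voter-control problems and with transitivity of the reducibility notions. There is no new combinatorial construction needed; the work is entirely in chaining the reductions at the correct reducibility level.

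I would first dispose of membership. For any fixed candidate set, $k$-Approval winners can be computed in polynomial time, so for any set $W'$ of added voters (or $V'$ of deleted voters) one can check in polynomial time whether $p$ fails to be the unique winner. Hence $2$-Approval-\#DCAV, $(k+1)$-Approval-\#DCAV, and $k$-Approval-\#DCDV all lie in $\sharpp$, and it only remains to establish the corresponding hardness.

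For the metric-complete claims, recall that $k$-Approval-\#CCDV is $\sharpp$-metric-complete for every $k \geq 2$ (by Theorem~\ref{thm:ccavdv-2} for $k = 2$ and by the preceding theorem for $k \geq 3$), and that $k$-Approval-\#CCAV is $\sharpp$-metric-complete for every $k \geq 3$ (again by the preceding theorem). By Theorem~\ref{thm:c-to-d}, each of these metric-reduces to its destructive variant---$k$-Approval-\#CCDV to $k$-Approval-\#DCDV and $k$-Approval-\#CCAV to $k$-Approval-\#DCAV. Since every $\sharpp$ problem metric-reduces to the relevant constructive problem and metric reductions are transitive, every $\sharpp$ problem metric-reduces to the associated destructive problem. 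This yields $\sharpp$-metric-completeness of $k$-Approval-\#DCDV for all $k \geq 2$ and of $j$-Approval-\#DCAV for all $j \geq 3$; substituting $j = k+1$ gives the stated form for $(k+1)$-Approval-\#DCAV with $k \geq 2$.

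For $2$-Approval-\#DCAV the constructive analogue $2$-Approval-\#CCAV is only known to be $\sharpp$-Turing-complete (Theorem~\ref{thm:ccavdv-2}), so the conclusion must be weakened accordingly. I would observe that the metric reduction from $2$-Approval-\#CCAV to $2$-Approval-\#DCAV furnished by Theorem~\ref{thm:c-to-d} is in particular a Turing reduction; composing it with the Turing-hardness of $2$-Approval-\#CCAV and invoking transitivity of Turing reductions shows that every $\sharpp$ problem Turing-reduces to $2$-Approval-\#DCAV, which together with membership gives $\sharpp$-Turing-completeness. The only point requiring care---the main ``obstacle''---is bookkeeping of reducibility types: one must not overclaim metric-completeness for $2$-Approval-\#DCAV where only Turing-completeness is available, and one must correctly align the index shift $(k+1)$ with the $k \geq 3$ range over which the constructive adding-voters results are metric-complete.
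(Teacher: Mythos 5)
Your proposal is correct and follows essentially the same route as the paper, which likewise obtains all destructive-case hardness results by composing the constructive completeness results with the metric reduction of Theorem~\ref{thm:c-to-d} (noting, exactly as you do, that this reduction is in particular a Turing reduction for the $2$-Approval adding-voters case). Your bookkeeping of the reducibility levels and of the index shift from $k \geq 3$ to $(k+1)$ with $k \geq 2$ matches the paper's intent.
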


Let us now move on to the single-peaked case. Here we show
polynomial-time algorithms for all counting voter control problems
under $k$-Approval (for fixed $k$). Our algorithm is based on dynamic
programming.  We use the following notation in the proof below: For a
given $j$-element integer vector $\scores$ and integer $i$, $1 \le i \le j$,
we write $\scores_i$ to denote the $i$-{th} element of vector
$\scores$ (i.e., $\scores = (\scores_1,...,\scores_j)$).  

\begin{theorem}
  For the case of single-peaked profiles, for each fixed positive
  integer $k$, $k$-Approval-\#CCAV, $k$-Approval-\#CCDV,
  $k$-Approval-\#DCAV and $k$-Approval-\#DCDV are in $\fp$.
\end{theorem}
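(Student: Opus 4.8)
The plan is to prove the claim only for $k$-Approval-\#CCAV; the other three problems then follow immediately. Indeed, by Theorem~\ref{thm:d-to-a}, $k$-Approval-\#CCDV Turing-reduces to $k$-Approval-\#CCAV, and by Theorem~\ref{thm:c-to-d} the two destructive variants metric-reduce to their constructive counterparts. The key point I would stress is that none of these reductions changes any voter's preference order (they only reclassify registered/unregistered voters, or flip the winning condition), so each produced instance is still single-peaked with respect to the same axis $L$; hence every oracle call stays within the single-peaked CCAV case and is answered by the algorithm below.

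So fix an input $(C,V,W,p,\ell)$ that is single-peaked with respect to an axis ordering the candidates $c_1 \spl \cdots \spl c_m$, write $p = c_{j_0}$, and assume $m > k$ (if $m \le k$ the instance is trivial: either every vote approves all candidates, so no candidate can be a unique winner, or $C=\{p\}$ and $p$ always wins). The first step is a structural observation: the top-$k$ candidates of any single-peaked vote form a block of $k$ consecutive candidates on $L$, because an upper-contour set of a single-peaked order is an interval of the axis. Thus under $k$-Approval each voter simply awards one point to every candidate in some length-$k$ window $\{c_i,\ldots,c_{i+k-1}\}$, so a voter is fully described by the start index $i$ of its window. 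I let $a_i$ be the number of voters of $V$ with window start $i$ and $w_i$ the number of voters of $W$ with window start $i$; both are computable in polynomial time.

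The second step recasts the counting question arithmetically. A subset $W' \subseteq W$ is captured, up to the number of ways of choosing it, by the vector $(x_i)$ where $x_i$ is the number of added voters with window start $i$ (so $0 \le x_i \le w_i$), and the number of subsets realizing a given vector is $\prod_i \binom{w_i}{x_i}$. In $(C,V\cup W')$ the score of $c_j$ equals the window sum $\sigma_j = \sum_{i\,:\,i \le j \le i+k-1}(a_i + x_i)$, and we seek $\sum_{(x_i)}\prod_i \binom{w_i}{x_i}$ over all vectors with $\sum_i x_i \le \ell$ and $\sigma_{j_0} > \sigma_j$ for every $j \ne j_0$.

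The main obstacle is that the ``beat $p$'' constraints compare each $\sigma_j$ against the global quantity $\sigma_{j_0}$, which a left-to-right sweep learns only upon reaching $p$. I would resolve this with an outer loop over the intended final score $\rho = \sigma_{j_0}$ of $p$: there are only polynomially many candidate values $0 \le \rho \le \|V\|+\|W\|$, and fixing $\rho$ turns the comparison into the local tests $\sigma_j \le \rho-1$ for $j \ne j_0$ and $\sigma_{j_0} = \rho$, with each vector counted for exactly one $\rho$ (so no double counting). For fixed $\rho$, I would run a dynamic program sweeping the window-start index $i = 1,\ldots,m-k+1$ and choosing $x_i$; since $\sigma_j$ depends only on the $k$ consecutive choices $x_{j-k+1},\ldots,x_j$, the state needs only the last $k-1$ chosen values together with the running total $\sum x_i$ (capped at $\ell$). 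As $k$ is constant and each $x_i \le \|W\|$, the state space has size $O(\|W\|^{k})$, so the table — which accumulates the weights $\binom{w_i}{x_i}$ and discards any partial assignment the moment a window constraint it fully determines is violated — is filled in polynomial time; summing over terminal states and then over $\rho$ yields the answer. The genuinely delicate points I would check carefully are the interval/window lemma and the bookkeeping that each window constraint is tested exactly once, namely when its last contributing index $x_j$ is fixed, so that the binomial weights multiply correctly.
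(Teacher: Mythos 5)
Your proof is correct, and while it follows the paper's overall strategy---reduce everything to \#CCAV via Theorems~\ref{thm:c-to-d} and~\ref{thm:d-to-a} (your explicit check that these reductions preserve single-peakedness is a point the paper leaves implicit), observe that under single-peakedness each voter's approval set is a block of $k$ consecutive candidates on the axis, guess $p$'s final score, and run a dynamic program along the axis---your decomposition of that dynamic program is genuinely different. The paper sorts the individual voters of $V \cup W$ by the positions of their approval blocks and recurses voter-by-voter, carrying as state a full $\|C\|$-dimensional vector of score caps, and then needs a separate argument that only $O(\|C\|z^k)$ such vectors are ever reachable. You instead aggregate voters into window types, so that a solution is a multiplicity vector $(x_i)$ weighted by $\prod_i\binom{w_i}{x_i}$, and sweep over window-start positions with a state consisting of just the last $k-1$ multiplicities and the running budget; the polynomial bound on the state space is then immediate. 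This buys a cleaner correctness argument---each candidate's score constraint is checked exactly once, at its last contributing window start, with the boundary candidates $c_{m-k+2},\ldots,c_m$ checked at the end of the sweep---at essentially the same asymptotic cost. The two points you flag as delicate, the interval property of top-$k$ sets (which follows directly from Definition~\ref{def:sp}) and the once-per-constraint bookkeeping, are exactly the right ones, and both go through.
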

\begin{proof}
  We give the proof for $k$-Approval-\#CCAV.  The result for
  $k$-Approval-\#CCDV follows by Theorem~\ref{thm:d-to-a} and the
  results for the destructive cases follow by
  Theorem~\ref{thm:c-to-d}.

  We are given a candidate set $C$, a voter collection $V$, a
  designated candidate $p \in C$, a collection $W$ of unregistered
  voters, a positive integer $\ell$, and a societal axis
  $\mathrel{L}$.  We show a polynomial time algorithm for determining
  the number of sets $W'$, $W' \subseteq W$, where $\|W'\| \le \ell$,
  such that $p$ is a k-Approval winner in election $(C, V \cup W')$.

  For each voter $v \in V \cup W$, we define $\best_k(v)$ to be the
  set of $k$ most preferred candidates according to $v$.  Let
  $\mathrel{\widetilde{L'}}$ be a weak linear order over voter set $W
  \cup V$, such that $v_1 \mathrel{\widetilde{L'}} v_2$, $v_1,v_2 \in
  V \cup W$, if and only if there exists a candidate $c \in
  \best_k(v_2)$ such that for each $c' \in \best_k(v_1)$ we have $c'
  \mathrel{L} c$ or $c' = c$.  Let $\mathrel{L'}$ be a strict order
  obtained from $\mathrel{\widetilde{L'}}$ by breaking the ties in an
  arbitrary fashion.  For each voter $v \in V \cup W$ let $X^v = \{w
  \mid w \in V \cup W \land w \mathrel{L'} v\}$ be the subset of
  voters from $V \cup W$ that precede voter $v$ under $\mathrel{L'}$.
  We let $v_l$ to be the last voter from $V$ under $\mathrel{L'}$.
  For a given voter $v \in V \cup W$, integer $z \in \integers$, and
  integer $s \in \integers$, we define $g(v,z,s)$ to be the number of
  sets $\widetilde{W} \subseteq X^v \cap W$, where $\|\widetilde{W}\|
  = s$, such that $p$ is a k-Approval winner in election $(C, (X^v
  \cap V) \cup \widetilde{W} \cup \{v\})$ and in addition the score of
  candidate $p$ in this election is exactly $z$.  Equation
  \eqref{k-app-ccav-main} below gives the result that we should output
  on the given input instance of k-Approval-\#CCAV problem:

  \begin{equation}\label{k-app-ccav-main}
    \sum_{0 \le s \le \ell} \left( \sum_{0 \le z \le \|W \cup V\|} \left( g(v_l,z,s) + \sum_{\substack{w \in W \\ v_l \mathrel{L'} w}} g(w,z,s) \right) \right)
  \end{equation}

  Thus, it suffices to show how to compute $g(w,z,s)$ for $w \in V
  \cup W$ and $z,s \in \integers$ in polynomial time.  Before we give
  an appropriate algorithm, we need to introduce some additional
  notation.

  For each candidate $c \in C$, let $\rank(c)$ be $c$'s rank under
  $\mathrel{L}$ over all candidates from $C$ (so, for example, if $C =
  \{a,b,c\}$ and $a \mathrel{L} b \mathrel{L} c$ then $\rank(a) = 1$,
  $\rank(b) = 2$ and $\rank(c) = 3$). For each voter $v$, let $\rank(v) =
  \max \{ \rank(c) \mid c \in \best_k(v)\}$.
  For each $v \in V \cup W$, let $\mathcal{P}^v= \{w \mid w \in (V
  \cup W) \land w \mathrel{L'} v \land (\nexists t \in V)[ w
  \mathrel{L'} t \mathrel{L'} v]\}$.  In other words, $\mathcal{P}^v$
  consists of the closest voter $u \in V$ that precedes $v$ under
  $\mathrel{L'}$, and of all the voters that are between $u$ and $v$ under
  $\mathrel{L'}$.  When $v$ is the first element from $V$ under
  $\mathrel{L'}$, then $\mathcal{P}^v$ contains all the voters from
  $W$ preceeding $v$ under $\mathrel{L'}$.

  For a $j$-element integer vector $\scores$ and a nonnegative
  integer $r$, let $\cutoff{\scores}{r}$ denote $j$-element vector
  $(\scores_1, \ldots, \scores_r, 0,\ldots, 0)$ (that is, we replace
  the last $j-r$ entries of vector $\scores$ with zeros).
  For each voter $w \in V \cup W$, we define $\approval(w)$ to be the
  $\|C\|$-dimensional $0/1$ vector that for each candidate $c \in C$
  has $1$ at position $\rank(c)$ if and only if $c \in \best_k(w)$.
  (In other words, $\approval(w)$ is the $0/1$ vector that indicates
  which candidates receive points from voter $w$.) Note that, due to
  single-peakedness of the election, for each voter $w$,
  $\approval(w)$ contains exactly a single consecutive block of $k$
  ones.

  We are now ready to proceed with our algorithm for computing
  function $g$.  For a given voter $v \in V \cup W$, given integers
  $z,s \in \integers$, and a $\|C\|$-dimensional integer vector
  $\scores$, we define $f(v,z,s,\scores)$ to be the number of sets
  $\widetilde{W} \subseteq X^v \cap W$ such that $\|\widetilde{W}\| =
  s$ and in election $(C,(X^v \cap V) \cup \widetilde{W} \cup \{v\})$
  the following holds: (1) $p$ scores exactly $z$ points and (2) each
  candidate $c \in C -\{p\}$ scores no more than $\scores_{\rank(c)}$ points.
  For a given integer $r \in \integers$, let $\Gamma^{r}$ be the
  vector $(r,...,r)$ of dimension $\|C\|$.  Clearly, for a given voter
  $v \in V \cup W$ and given integers $z,s \in \integers$ we have:

  \begin{equation}\label{k-app-ccav-convert}
    g(v,z,s) = f(v,z,s,\Gamma^{z-1})
  \end{equation}

  We now show a recursive formula for $f$.  For a given voter $v \in V
  \cup W$, a given vector $\scores$ of (nonnegative) integers of
  dimension $\|C\|$, and two integers $z,s \in \integers$, we have:

  \begin{equation}\label{k-app-ccav-transform}
    f(v,z,s,\scores) = f(v,z,s,\cutoff{\scores}{\rank(v)})
  \end{equation}

  This follows from the fact that in election $(C,X^v \cup \{v\})$
  only candidates with ranks $j \le \rank(v)$ can score a point.  It is
  easy to see that $f(v,z,s,\scores) = 0$ when $z < 0$ or $\scores$
  contains at least one negative entry, because the score is always a
  nonnegative integer.  When $s = 0$ then $f(v,z,s,\scores)$ is $1$ if
  and only if in election $(C,(X^v \cap V) \cup \{v\})$ candidate $p$
  scores $z$ points and each candidate $c \in C$ scores no more than
  $\scores_{\rank(c)}$ points; otherwise $f(v,z,s,\scores)$ is $0$.
  When $s > 0$, we note that each election consistent with the
  condition for $f(v,z,s,\scores)$ contains at least one voter $w$
  from $\mathcal{P}^v$.  Eq.~\eqref{k-app-ccav-sum} below gives
  formula for $f$ in case when $s > 0$;  for each voter $w \in
  \mathcal{P}^v$ we count all the sets $\widetilde{W} \subseteq X^v
  \cap W$ such that $w$ directly precedes $v$ in $(X^v \cap V) \cup
  \widetilde{W} \cup \{v\}$ under $L'$:

  \begin{equation}\label{k-app-ccav-sum}
    f(v,z,s,\scores) = \sum_{w \in \mathcal{P}_v} f(w,z-z',s-s',\scores - \approval(w)),
  \end{equation}
  where (1) $z' = 1$ if $p \in \best_k(v)$ and $z'=0$ otherwise, and
  (2) $s' = 1$ if $v \in W$ and $s' = 0$ otherwise.  By combining
  Eq.~\eqref{k-app-ccav-transform} and \eqref{k-app-ccav-sum} we get:
  \begin{equation}\label{k-app-ccav-sum-final}
    f(v,z,s,\scores) = \sum_{w \in \mathcal{P}_v} f(w,z-z',s-s',\cutoff{\scores - \approval(w)}{\rank(w)}).
  \end{equation}

  We claim that function $g$ can be computed through
  Eq.~\eqref{k-app-ccav-convert} in polynomial time, using standard
  dynamic programming techniques to compute function $f$.  The reason
  is that to compute $f$ for the arguments as in
  Eq.~\eqref{k-app-ccav-convert} using recursive formula
  \eqref{k-app-ccav-sum-final}, we need to obtain $f$'s values for at
  most $\|C\| (\|V\| + \|W\|)^{k+3}$ different arguments. This is
  because starting from $\scores = \Gamma^{z-1}$, the only allowed
  transformations of $\scores$ are given in
  Eq.~\eqref{k-app-ccav-sum-final} and ensure that whenever we need to
  compute $f$, the $\scores$ vector is of the form
  $(z-1,z-1,...,z-1,e_1,e_2,...,e_k,0,0,...,0)$, where $e =
  (e_1,e_2,...,e_k)$ is some $k$-element vector of integers and for
  each $i$, $1 \le i \le k$, we have $0 \le e_i \le z - 1$.  Clearly,
  there are no more than $\|C\|z^k$ vectors of this form.  Thus, we
  can compute $g$ in polynomial time and, through
  Eq.~\eqref{k-app-ccav-main}, we can solve $k$-Approval-\#CCAV in
  polynomial time.
\end{proof}

\subsection{Approval Voting and Condorcet Voting}
Let us now consider Approval voting and Condorcet voting. While these
two systems may seem very different in spirit, their behavior with
respect to election control is similar. Specifically, in the
unrestricted case, for both systems \#CCAV and \#CCDV are
$\sharpp$-complete, for both systems it is impossible to make some
candidate a winner by adding candidates, and for both systems it is
impossible to prevent someone from winning by deleting
candidates. Yet, for both systems \#DCAC and \#CCDC are in $\fp$ via
almost identical algorithms. There is, however, also one difference.
For the single-peaked case, we were able to find polynomial-time
algorithms for voter control under Condorcet, while the results for
Approval remain elusive. 

\begin{theorem}\label{thm:approval-condorcet-avdv}
  In the unrestricted case, each of Approval-\#CCAV, Approval-\#CCDV,
  Condorcet-\#CCAV, and Condorcet-\#CCDV is $\sharpp$-complete. Their
  destructive variants are $\sharpp$-metric-complete.
\end{theorem}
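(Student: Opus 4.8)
The plan is to obtain $\sharpp$-membership directly and to prove $\sharpp$-completeness by giving \emph{parsimonious} reductions from \#X3C to each of the four constructive problems; the destructive counterparts then follow at once from Theorem~\ref{thm:c-to-d}. Membership is routine: in each problem a certificate is the chosen set of added/deleted voters, and deciding whether $p$ is the unique Approval (resp.\ Condorcet) winner of the resulting election takes polynomial time, so the decision versions lie in $\NPclass$ and their counting versions in $\sharpp$. For hardness I fix an X3C instance with $B=\{b_1,\dots,b_{3k}\}$ and sets $S_1,\dots,S_n$, create one candidate $b_i$ per element together with the designated candidate $p$, and attach one ``set voter'' to each $S_j$. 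The whole design is driven by one idea: tune the numbers so that a control action succeeds \emph{iff} it has size exactly $k$ and touches every element exactly once (i.e.\ iff the chosen sets form an exact cover), making the map from solutions to exact covers a bijection, hence parsimonious. The size is pinned to $k$ by a dedicated ``threshold'' candidate $d$ that $p$ can overtake only after $k$ actions, while the per-element scores/margins are set so that covering an element twice is fatal; together with $\sum_i c_i = 3k$ (where $c_i$ is the number of chosen sets containing $b_i$) this forces every $c_i=1$.

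For Approval this can be made fully explicit. For Approval-\#CCAV take candidates $\{p,d\}\cup B$ with registered ballots giving $p$ score $0$, $d$ score $k-1$, and each $b_i$ score $k-2$, and let the unregistered voter for $S_j$ approve exactly $p$ and the three elements of $S_j$; with budget $k$, adding $t$ voters gives $p$ score $t$, keeps $d$ at $k-1$, and raises $b_i$ to $k-2+c_i$, so $p$ is the unique winner iff $t=k$ and every $c_i\le 1$, that is, iff an exact cover is added. For Approval-\#CCDV, pad every element to a common score $D=\max_i\deg(b_i)$ using single-candidate ballots and give $p$ score $D$; now each $b_i$ must lose at least one approval, which needs at least $k$ deletions, and a budget of $k$ forces deleting exactly $k$ pairwise-disjoint set voters, again an exact cover. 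One checks that deleting a padding voter or a $p$-voter is never affordable, so no spurious solutions arise and the reduction stays parsimonious.

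For Condorcet I realize the same plan through the weighted majority graph $N_E$, fixing its base values by a McGarvey-style construction. Each set voter is oriented so that the control action shifts $p$'s pairwise margin by $+1$ against every non-member of $S_j$ and by $-1$ against every member, and shifts $p$'s margin against $d$ in $p$'s favor; choosing the base margins appropriately then makes $p$ a Condorcet winner exactly when the action has size $k$ and is a packing, i.e.\ an exact cover. The delicate point, and the main obstacle, is parsimony in the deleting case: since in Condorcet-\#CCDV every voter is deletable, I must ensure the auxiliary voters that set the base margins are never usefully deleted. I secure this by making every auxiliary voter rank $p$ above $d$, so that deleting one only worsens $p$'s contest against the threshold candidate; overtaking $d$ then forces the entire budget onto set voters, leaving no room for spurious deletions. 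What remains is bookkeeping: adjusting parities so that all of $p$'s contests are won strictly, which standard duplication and dummy-candidate tricks handle.

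Finally, by Theorem~\ref{thm:c-to-d} each of Approval-\#CCAV, Approval-\#CCDV, Condorcet-\#CCAV, and Condorcet-\#CCDV metric reduces to its destructive variant and conversely; since each constructive problem is $\sharpp$-parsimonious-complete (hence $\sharpp$-hard under metric reductions) and each destructive problem is in $\sharpp$, transitivity yields that Approval-\#DCAV, Approval-\#DCDV, Condorcet-\#DCAV, and Condorcet-\#DCDV are $\sharpp$-metric-complete. I expect the genuinely nontrivial work to be concentrated in the Condorcet deleting case, precisely because parsimony demands that the margin-setting apparatus be inert under any feasible deletion.
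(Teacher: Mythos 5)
Your high-level strategy coincides with the paper's: membership is routine, hardness is obtained via parsimonious reductions from \#X3C to the four constructive problems, and the destructive variants follow from Theorem~\ref{thm:c-to-d} together with transitivity. The difference lies in how the reductions are obtained. The paper writes a fresh construction only for Condorcet-\#CCAV and otherwise observes that known $\np$-hardness reductions (Hemaspaandra--Hemaspaandra--Rothe for the two Approval problems, and the Llull paper for Condorcet-\#CCDV) are already parsimonious; you build all four from scratch. Your two Approval gadgets are explicit and correct, and more self-contained than the paper's citations. For Condorcet-\#CCAV your threshold-candidate-plus-margins design is a legitimate alternative to the paper's construction, which needs no extra candidate $d$ and no margin-setting apparatus at all: it takes $C = B \cup \{p\}$, $k-3$ registered voters all voting $b_1 > \cdots > b_{3k} > p$, and one unregistered voter $b_{j_1} > b_{j_2} > b_{j_3} > p > \cdots$ per set $S_j$, so that the number $k-3$ by itself forces exactly $k$ additions and at most one ``hit'' per element. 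That version is worth knowing because it sidesteps the auxiliary-voter machinery that dominates your writeup.

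The one place where your proposal has a genuine hole is exactly where you predicted it: Condorcet-\#CCDV. First, the claim that the margin-setting voters can be produced by a ``McGarvey-style construction'' with every auxiliary voter ranking $p$ above $d$ does not survive inspection: the canonical McGarvey pair $a > b > C - \{a,b\}$ and $\revnot{C-\{a,b\}} > a > b$ necessarily places $p$ and $d$ in opposite relative orders in its two votes (for $a,b \notin \{p,d\}$), so you cannot use that gadget and simultaneously guarantee that no auxiliary voter prefers $d$ to $p$. The repair is to note that only $p$'s row of the weighted majority graph matters for the Condorcet condition, so one-sided voters of the form $X > p > d > (B-X)$ suffice: each contributes $+1$ to $p$'s margin against $d$ and against every $b \notin X$, and $-1$ against every $b \in X$. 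Second, one must check feasibility of the resulting multiplicities (they come out to quantities like $n-k+1-\deg(b_i)$, which can be negative without padding), and verify the parity of the final margins. None of this is fatal---and the paper avoids it entirely by citation---but as written your Condorcet-\#CCDV argument is a plan rather than a proof.
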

\begin{proof}
  Note that the results for the destructive variants will follow by
  Theorem~\ref{thm:c-to-d} and so we focus on constructive variants
  only.

  For the case of Approval, $\sharpp$-completeness of \#CCAV and
  \#CCDV follows from the $\np$-completeness proofs for their decision
  variants given by Hemaspaandra, Hemaspaandra, and
  Rothe~\shortcite{hem-hem-rot:j:destructive-control}; these proofs,
  in effect, give parsimonious reductions from \#X3C to respective
  control problems and, thus, establish $\sharpp$-completeness.

  For the case of Condorcet, $\sharpp$-completeness of \#CCDV follows
  from the proofs of Theorems 5.1 and 4.19 of Faliszewski et
  al.~\cite{fal-hem-hem-rot:j:llull}, who effectively give a
  parsimonious reduction from \#X3C to Condorcet-\#CCDV. The case of
  Condorcet-\#CCAV appears to not have been considered in the
  literature and thus we give a direct proof (naturally, we could
  obtain $\sharpp$-Turing-completeness by noting that
  Theorem~\ref{thm:d-to-a} gives a Turing reduction from
  Condorcet-\#CCDV to Condorcet-\#CCAV, but $\sharpp$-completeness is
  a stronger result). For the reminder of the proof we focus on
  Condorcet-\#CCAV.  The problem is clearly in $\sharpp$, so it
  suffices to show that it is $\sharpp$-hard.  We give a parsimonious
  reduction from \#X3C.

  Let $(B,\mathcal{S})$ be an instance of \#X3C problem, where
  $B=\{b_1,\dots,b_{3k}\}$ and $\mathcal{S}=\{S_1,\dots,S_n\}$.  We
  create an election $E=(C,V)$, where $C=B\cup\{p\}$.  Let $V$ consist
  of $k-3$ voters with preferences $b_1\succ b_2\succ\cdots\succ
  b_{3k}\succ p$.  Thus, $b_1$ is the Condorcet winner of $E$, and
  every candidate $b_i\in B$ beats $p$ in $k-3$ votes.

  For each set $S_j\in\mathcal{S}$, $S_j =
  \{b_{j_1},b_{j_2},b_{j_3}\}$, let $W$ contain a voter with
  preference order $b_{j_1}\succ b_{j_2}\succ b_{j_3}\succ
  p\succ\cdots$ (after $p$ the remaining candidates are ranked in
  arbitrary order).  We claim that every subset $W'$, $W'\subseteq W$,
  such that $\|W'\|\le k$ and that $p$ is a Condorcet winner of election
  $E'=(C,V\cup W')$ corresponds one-to-one to a $k$-element subfamily
  $\calS'$ of $\calS$ whose elements union up to $B$ (i.e., $\calS'$
  is an exact set cover of $B$).

  First assume that there is a subfamily
  $\mathcal{S'}\subseteq\mathcal{S}$ which is an exact set cover of
  $B$.  For each $S_j\in\mathcal{S'}$, we include the corresponding
  voter from $W$ in $W'$. Let us consider election $E'=(C,V \cup W')$.
  For each $b_i\in B$ we have $\N{E'}(b_i,p)=\N{E}(b_i,p)+1=k-2$ and
  $\N{E'}(p,b_i)=\N{E}(p,b_i)+k-1=k-1$.  Thus $p$ becomes the
  Condorcet winner of $E'$.

  Now assume that $p$ is the Condorcet winner in election $E'=(C,V
  \cup W')$, where $W' \subseteq W$ and $\|W'\| \leq k$.  For each $b_i
  \in B$, there can be at most one voter in $W'$ that prefers $b_i$ to
  $p$. This is so, because otherwise we would have
  $\N{E'}(b_i,p)\ge\N{E}(b_i,p)+2=k-1$, and
  $\N{E'}(p,b_i)\le\N{E}(p,b_i)+k-2=k-2$, and so $p$ would lose to
  $b_i$.  Thus, each $b_i$ is preferred to $p$ by either zero or one
  voter from $W'$.  If $b_i$ is preferred by one voter from $W'$, then
  for $p$ to win, $p$ must be preferred to $b_i$ by $k-1$ voters from
  $W'$, and since some voter must be added, it must hold that
  $\|W'\|=k$.

  If there are no voters in $W'$ who prefer $b_i$ to $p$, then since
  each vote in $W'$ has some three candidates from $B$ ranked ahead of
  $p$, there must be some other $b_{i'}$ that is ranked above $p$ by
  more than one voter.  This contradicts the requirement that for each
  $b_j \in B$, at most one voter in $W'$ prefers $b_j$ to $p$.  Hence,
  each $b_i$ is preferred to $c$ by exactly one of the $k$ voters in
  $W'$.  Thus, the voters from $W'$ correspond to an exact set cover
  of $B$.
\end{proof}

For the case of single-peaked preferences, we can use dynamic
programming to solve voter control problems under Condorcet.

\begin{theorem}
  For the single-peaked case, Condorcet-\#CCAV, Condorcet-\#CCDV,
  Condorcet-\#DCAV, and Condorcet-\#DCDV are in \fp.
\end{theorem}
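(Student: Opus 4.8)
The plan is to prove the result for Condorcet-\#CCAV and inherit the other three problems for free: Condorcet-\#CCDV Turing-reduces to Condorcet-\#CCAV by Theorem~\ref{thm:d-to-a}, and the two destructive variants reduce to their constructive counterparts by Theorem~\ref{thm:c-to-d}. So I would fix an instance $(C,V,W,p,k)$ of Condorcet-\#CCAV together with a societal axis under which every voter in $V \cup W$ is single-peaked. Since \#CCAV never changes the candidate set, $p$'s neighbors on the axis stay fixed throughout, which is what makes the problem tractable.

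The key step is a structural lemma collapsing the global Condorcet condition to two pairwise comparisons. Let $c_L$ and $c_R$ be the candidates immediately to the left and right of $p$ along the axis (treating the missing one as absent when $p$ is an endpoint). I claim that for every $W' \subseteq W$, candidate $p$ is the (necessarily unique) Condorcet winner of $(C, V \cup W')$ if and only if $p$ strictly beats both $c_L$ and $c_R$ in the pairwise majority. The nontrivial direction uses single-peakedness as a monotonicity engine: if $p \spl c_R \spl d$, then applying Definition~\ref{def:sp} to the triple $p, c_R, d$ shows that any voter preferring $p$ to $c_R$ also prefers $p$ to $d$, while any voter preferring $d$ to $p$ also prefers $c_R$ to $p$; hence $\N{(C,V\cup W')}(p,d) \ge \N{(C,V\cup W')}(p,c_R) > \N{(C,V\cup W')}(c_R,p) \ge \N{(C,V\cup W')}(d,p)$, so $p$ beats every candidate to the right of $c_R$, and symmetrically every candidate to the left of $c_L$.

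Next I would exploit single-peakedness once more to cut down the voter types. Writing $\alpha(v)=+1$ if $v$ prefers $p$ to $c_L$ and $-1$ otherwise, and $\beta(v)=+1$ if $v$ prefers $p$ to $c_R$ and $-1$ otherwise, each voter realizes one pair in $\{+1,-1\}^2$. But since $c_L \spl p \spl c_R$, Definition~\ref{def:sp} applied to $c_L, p, c_R$ forbids the pattern $(\alpha,\beta)=(-1,-1)$, because preferring $c_L$ over $p$ forces $p$ over $c_R$. Thus every voter in $W$ falls into exactly one of three types $T_1=(+1,+1)$, $T_2=(+1,-1)$, $T_3=(-1,+1)$, and all voters of a given type contribute identically to the two running margins.

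Finally the counting becomes a bounded triple sum. Let $M_L = \sum_{v\in V}\alpha(v)$ and $M_R = \sum_{v\in V}\beta(v)$ be the base margins, and let $n_1,n_2,n_3$ be the numbers of type-$T_1$, $T_2$, $T_3$ voters in $W$. Adding $j_1$ voters of type $T_1$, $j_2$ of type $T_2$, and $j_3$ of type $T_3$ makes $p$ the Condorcet winner exactly when $M_L + j_1 + j_2 - j_3 > 0$ and $M_R + j_1 - j_2 + j_3 > 0$, and the number of additions realizing a fixed profile $(j_1,j_2,j_3)$ is $\binom{n_1}{j_1}\binom{n_2}{j_2}\binom{n_3}{j_3}$. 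The answer is therefore
\[
\sum_{\substack{0 \le j_1 \le n_1,\ 0 \le j_2 \le n_2,\ 0 \le j_3 \le n_3 \\ j_1+j_2+j_3 \le k,\ M_L+j_1+j_2-j_3 > 0,\ M_R+j_1-j_2+j_3 > 0}} \binom{n_1}{j_1}\binom{n_2}{j_2}\binom{n_3}{j_3},
\]
a sum of at most $(\|W\|+1)^3$ polynomial-time computable terms, hence in \fp. The only cases needing separate (but trivial) handling are when $p$ is an endpoint of the axis (one neighbor, hence two types and a single margin constraint) or the sole candidate. I expect the main obstacle to be the structural lemma reducing the global Condorcet condition to the two neighbor comparisons together with the exclusion of the $(-1,-1)$ type; once those are in place, the remaining enumeration is routine.
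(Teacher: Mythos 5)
Your proposal is correct and is essentially the paper's own proof in different clothing: the paper reduces the Condorcet condition to the median-voter criterion (fewer than half the peaks strictly left of $p$ and fewer than half strictly right), which is exactly your pair of margin conditions against the immediate neighbors $c_L$ and $c_R$, and your three voter types $T_1,T_2,T_3$ coincide with the paper's partition of voters by whether their peak is $p$, right of $p$, or left of $p$. The final count in both cases is the same triple sum of products of three binomial coefficients over type profiles $(j_1,j_2,j_3)$ with $j_1+j_2+j_3\le k$ satisfying the two majority constraints.
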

\begin{proof}
  We focus on Condorcet-\#CCAV. The remaining cases follow by applying
  Theorems~\ref{thm:c-to-d} and~\ref{thm:d-to-a}.

  We are given an election $E = (C,V)$ where $C$ is a set of
  candidates and $V$ is a set of voters, a designated candidate $p \in
  C$, a collection $W$ of unregistered voters, a nonnegative integer
  $k$, and an order $\mathrel{L}$, the societal axis, such that $V$
  and $W$ are single-peaked with respect to $\mathrel{L}$.  We show a
  polynomial-time algorithm for determining the number of sets $W'$,
  $W' \subseteq W$, where $\|W'\| \le k$, such that $p$ is a Condorcet
  winner in election $(C, V \cup W')$.  Our algorithm is based on the
  famous median-voter theorem.

  We split $C$ into three sets, $C_a = \{c \mid c \in C \land c
  \mathrel{L} p\}$, $C_b = \{c \mid c \in C \land p \mathrel{L} c\}$
  and $C_m = \{p\}$; $C_a$ contains the candidates that are before $p$ on
  the societal axis and $C_b$ contains the candidates that are after
  $p$. For each voter $v$, by $c_v$ we mean the candidate that $v$
  ranks first.  For each collection $U$ of voters from $V \cup W$, we
  define $U_a = \{v \mid v \in U \land c_v \in C_a\}$, $U_b = \{v \mid
  v \in U \land c_v \in C_b\}$ and $U_m = \{v \mid v \in U \land c_v =
  p\}$.  In other words, $U_a$ and $U_b$ consist of those voters from $U$
  for which the most preferred candidate is, respectively, in $C_a$ or
  $C_b$, and $U_m$ contains those voters from $U$ that rank $p$ first.
  
  For each $W' \subseteq W$, we define $\delta(W')$ to be $1$ exactly
  if the following conditions hold:
  \begin{enumerate}
  \item $\|V_a\|+\|W'_a\| < \frac{1}{2}(\|V\|+\|W'\|)$, and
  \item $\|V_b\|+\|W'_b\| < \frac{1}{2}(\|V\|+\|W'\|)$.
  \end{enumerate}
  Otherwise, we define $\delta(W') = 0$.  The following lemma is an
  expression of the well-known median voter theorem (we provide the
  short proof for the sake of completeness).

  \begin{lemma}
    For each $W' \subseteq W$, $p$ is the Condorcet winner of election
    $(C,V \cup W')$ if and only if $\delta(W') = 1$.
  \end{lemma}
  \begin{proof}
    Assume that $\delta(W') = 1$ and consider an arbitrary candidate
    $c$ other than $p$. For the sake of concreteness let us assume
    that $c \in C_a$, but a symmetric argument holds for $c \in C_b$.
    Due to single-peakedness of $V \cup W'$, no voter outside of $V_a$
    and $W'_a$ prefers $c$ to $p$. However, since $\delta(W')=1$, we
    have that $\|V_a\|+\|W'_a\| < \frac{1}{2}(\|V\|+\|W'\|)$, and so a
    majority of the voters prefers $p$ to $c$. Since $c$ was chosen
    arbitrarily, it holds that $p$ is a Condorcet winner.

    For the other direction, assume that $\delta(W') = 0$. For the
    sake of concreteness, let us assume that this is because
    $\|V_a\|+\|W'_a\| \geq \frac{1}{2}(\|V\|+\|W\|)$. By this
    assumption, there must be a candidate $c \in C$ that directly
    precedes $p$ with respect to $L$ (that is, $c \spl p$ and there is
    no candidate $d$ such that $c \spl d \spl p$). Due to
    single-peakedness of voters' preference orders, we have that every
    voter in $V_a \cup W'_a$ prefers $c$ to $p$, and so $p$ is not a
    Condorcet winner. A symmetric argument holds if $\|V_b\|+\|W'_b\|
    \geq \frac{1}{2}(\|V\|+\|W\|)$.
  \end{proof}

  Thus our algorithm should output the number of sets $W'$, $W'
  \subseteq W$, of cardinality at most $k$, such that $\delta(W')=1$
  holds. However, to evaluate $\delta(W')$ it suffices to know the
  values $\|W'_a\|$, $\|W'_b\|$, and $\|W'_m\|$. For each three
  integers $\ell_a$, $\ell_b$, and $\ell_m$ we define
  $\gamma(\ell_a,\ell_b,\ell_m)$ to be $1$ exactly if the following
  two conditions hold (these conditions are analogous to those for
  $\delta$):
  \begin{enumerate}
  \item $\|V_a\|+\ell_a < \frac{1}{2}(\|V\|+\ell_a+\ell_b+\ell_m)$, and
  \item $\|V_b\|+\ell_b < \frac{1}{2}(\|V\|+\ell_a+\ell_b+\ell_m)$.
  \end{enumerate}
  Otherwise, $\gamma(\ell_a,\ell_b,\ell_m)=0$. It is easy to see that
  for each $W' \subseteq W$ we have $\delta(W') =
  \gamma(\|W'_a\|,\|W'_b\|,\|W'_m\|)$. Now it follows that the number
  of subsets $W' \subseteq W$ of cardinality at most $k$ such that
  $p$ is a Condorcet winner of election $(C,V\cup W')$ is exactly:
  \[
    \sum_{\ell=0,...,k} 
    \sum_{\substack{
        \ell_a, \ell_b, \ell_m \in \naturals \\ 
        \ell_a + \ell_b + \ell_m = \ell
      }}
    \gamma(\ell_a,\ell_b,\ell_m)
    {\|W_a\| \choose \ell_a}
    {\|W_b\| \choose \ell_b}
    {\|W_m\| \choose \ell_m}.
  \]
  We can evaluate this sum in polynomial-time. This completes the
  proof.
\end{proof}

For the case of candidate control, we get polynomial-time algorithms
even for the unrestricted case. 

\begin{theorem} \label{th:apdcc} Approval-\#CCDC, and
  Condorcet-\#CCDC are in $\fp$, even in the unrestricted case.
\end{theorem}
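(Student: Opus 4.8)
The plan is to exploit a common structural feature shared by both rules: deleting candidates other than $p$ never changes how $p$ compares to a surviving candidate. For Approval, the score $\score{(C,V)}^a(c)$ of any candidate $c$ is simply the number of voters approving $c$, and this count is completely independent of which other candidates are present; hence $\score{(C-C',V)}^a(c) = \score{(C,V)}^a(c)$ for every surviving $c$. For Condorcet, the quantities $\N{(C,V)}(p,c)$ and $\N{(C,V)}(c,p)$ depend only on the relative order of $p$ and $c$ within each vote, which is unaffected by removing third candidates; thus the sign of $\N{E}(p,c) - \N{E}(c,p)$ is also invariant under deletion. This invariance is the key observation, and once it is in place the problem becomes pure counting.

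Given the invariance, I would next characterize exactly which deletions make $p$ win. For Approval, $p$ is the unique winner of $(C-C',V)$ iff every surviving candidate has strictly smaller approval score than $p$; equivalently, iff $C'$ contains the whole ``dangerous'' set $D = \{c \in C - \{p\} \mid \score{(C,V)}^a(c) \geq \score{(C,V)}^a(p)\}$. For Condorcet, I would first recall that a Condorcet winner, when it exists, is automatically the unique winner, so $p$ is the unique winner of $(C-C',V)$ iff $p$ beats every surviving candidate pairwise; this holds iff $C'$ contains $D = \{c \in C - \{p\} \mid \N{(C,V)}(p,c) \leq \N{(C,V)}(c,p)\}$. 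In both cases the candidates lying outside $D \cup \{p\}$ --- call their number $r$ --- may be kept or deleted arbitrarily without affecting whether $p$ wins.

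The counting then follows immediately. Writing $d = \|D\|$, a valid $C'$ is precisely $D$ together with any subset of the $r$ ``safe'' candidates, subject to $\|C'\| \le k$ and $p \notin C'$. Hence if $d > k$ the answer is $0$, and otherwise it equals $\sum_{i=0}^{\min(r,\,k-d)} \binom{r}{i}$. Both $D$ and $r$ are computable in polynomial time (by reading off all approval scores, respectively all pairwise tallies $\N{(C,V)}(\cdot,\cdot)$), and the final sum involves at most $r+1$ binomial coefficients, so the whole procedure runs in polynomial time, placing both problems in $\fp$.

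I do not expect a serious technical obstacle; the only point requiring care is the unique-winner condition. For Condorcet this amounts to the remark that a Condorcet winner is necessarily unique, so ``unique winner'' reduces cleanly to ``$p$ beats all survivors pairwise.'' For Approval it is the strict inequality in the score comparison, which forces candidates merely \emph{tying} $p$'s score to be deleted as well, i.e.\ to belong to $D$. Getting these boundary cases right is essentially the whole content of the argument; the invariance observation does the rest of the work, and it is exactly this shared invariance that makes the two algorithms ``almost identical.''
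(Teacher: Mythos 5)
Your proposal is correct and follows essentially the same route as the paper's proof: identify the set of ``dangerous'' candidates that must all be deleted (those with approval score at least $p$'s, respectively those that $p$ does not beat pairwise), observe that the remaining candidates may be deleted or kept freely, and count via a sum of binomial coefficients. Your explicit treatment of the score/pairwise-comparison invariance under deletion and of the $d>k$ boundary case is a welcome bit of extra care, but the algorithm and the final formula coincide with the paper's.
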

\begin{proof}
  Let us handle the case of Approval first.  Let $I = (C,V,p,k)$ be an
  instance of approval-\#CCDC. The only way to ensure that $p \in C$
  is the unique winner is to remove all candidates $c \in C-\{p\}$
  such that $\score{(C,V)}^a(c)\geq\score{(C,V)}^a(p)$. Such
  candidates can be found immediately. Let's assume that there are
  $k_0$ such candidates. After removing all of them, we can also
  remove $k-k_0$ or less of any remaining candidates other than $p$.
  Based on this observation we provide the following simple algorithm.
  \begin{codebox}
    \Procname{$\proc{approval-\#CCDC}(C,V,p,k)$} \li Let $k_0$ be the
    number of candidates $c \in C-\{p\}$,
    \zi	\>s.t.\ $\score{(C,V)}^a(c)\geq\score{(C,V)}(p)$. \\[-4mm]
    \li \Return $\sum_{i=0}^{k-k_0}\binom{\|C\|-k_0-1}{i}$
  \end{codebox}
  Clearly, the algorithm is correct and runs in polynomial-time.

  For the case of Condorcet voting, it suffices to note that if $p$ is
  to be a winner, we have to delete all candidates $c \in C - \{p\}$
  such that $\N{(C,V)}(p,c) \leq \N{(C,V)}(c,p)$. Thus, provided that
  we let $k_0$ be the number of candidates $c \in C - \{p\}$ such that
  $\N{(C,V)}(p,c) \leq \N{(C,V)}(c,p)$, the same algorithm as for the
  case of approval voting works for Condorcet voting.
\end{proof}

\begin{theorem} \label{th:apacd}
  Both Approval-\#DCAC and Condorcet-\#DCAC are in $\fp$.
\end{theorem}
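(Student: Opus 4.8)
The plan is to exploit the key invariance that under both Approval and Condorcet, adding candidates to an election perturbs neither the approval scores of the already-present candidates nor the relevant pairwise comparisons. Concretely, for an instance $(C,V,A,p,k)$, the approval score $\score{(C\cup A',V)}^a(c)$ of any $c\in C$ equals $\score{(C,V)}^a(c)$ for every $A'\subseteq A$, and likewise $\N{(C\cup A',V)}(p,c)=\N{(C,V)}(p,c)$ for every $c\in C$. Hence whether $p$ loses unique-winner status is governed entirely by the candidates we choose to add, which lets the count collapse to a transparent binomial computation (mirroring the flavor of the \#CCDC argument in Theorem~\ref{th:apdcc}).

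First I would treat Approval, splitting on whether $p$ is already blocked in $(C,V)$. If some $c\in C-\{p\}$ satisfies $\score{(C,V)}^a(c)\ge\score{(C,V)}^a(p)$, then by the invariance $p$ is not the unique winner of $(C\cup A',V)$ for \emph{any} $A'$, so every subset $A'\subseteq A$ with $\|A'\|\le k$ is a solution and the answer is $\sum_{i=0}^{\min(k,\|A\|)}\binom{\|A\|}{i}$. Otherwise $p$ is the unique winner of $(C,V)$, and $p$ fails to remain the unique winner of $(C\cup A',V)$ exactly when $A'$ contains at least one ``dangerous'' added candidate, i.e.\ one whose approval score is at least that of $p$. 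Letting $t$ be the number of dangerous candidates in $A$, I would count by complementation: all size-at-most-$k$ subsets minus those avoiding every dangerous candidate, namely
\[
\sum_{i=0}^{\min(k,\|A\|)}\binom{\|A\|}{i}-\sum_{i=0}^{\min(k,\|A\|-t)}\binom{\|A\|-t}{i}.
\]
Both sums are plainly polynomial-time computable.

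The Condorcet case is structurally identical, replacing ``$c$ has approval score at least that of $p$'' by ``$c$ ties or beats $p$ head-to-head,'' i.e.\ $\N{(C,V)}(c,p)\ge\N{(C,V)}(p,c)$. Since a Condorcet winner is unique whenever it exists, $p$ is the unique winner of $(C\cup A',V)$ iff $p$ beats every other candidate pairwise; given that $p$ already beats all of $C-\{p\}$, this holds iff $p$ beats each added candidate. The same case split and the same formula apply, now with $t$ counting the candidates $a\in A$ for which $\N{(C,V)}(a,p)\ge\N{(C,V)}(p,a)$.

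Because both rules reduce to the identical counting identity, there is no genuine obstacle here; the one thing that must be nailed down is the invariance observation of the first paragraph, since it is precisely what licenses replacing the destructive question by ``does $A'$ contain a dangerous candidate?'' The only care needed is in the binomial bookkeeping and the boundary regime $k\ge\|A\|$, both of which the $\min$ in the index ranges handles.
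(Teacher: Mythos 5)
Your proposal is correct and follows essentially the same route as the paper: the same invariance observation (adding candidates changes neither approval scores nor pairwise comparisons of existing candidates), the same case split on whether $p$ already fails in $(C,V)$, and the same reduction to ``does $A'$ contain a dangerous candidate?'' The only cosmetic difference is that you count the bad subsets by complementation while the paper counts them directly by choosing at least one candidate from the dangerous set; the two computations are trivially equivalent.
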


\begin{proof}
  We first consider the case of approval voting.  Let $I =
  (C,A,V,p,k)$ be an instance of Approval-\#DCAC, where $C = \{p,c_1,
  \ldots, c_{m-1}\}$ is the set of registered candidates, $A = \{a_1,
  \ldots, a_{m'}\}$ is the set of additional candidates, $V$ is the
  set of voters, $p$ is the designated candidate, and $k$ is the upper
  bound on the number of candidates that we can add. We will give a
  polynomial-time algorithm that counts the number of
  up-to-$k$-element subsets $A'$ of $A$ such that $p$ is not the
  unique winner of election $(C \cup A',V)$.

  Let $A_0$ be the set of candidates in $A$ that are approved by at
  least as many voters as $p$ is. To ensure that $p$ is not the unique
  winner of the election (assuming $p$ is the unique winner prior to
  adding any candidates), it suffices to include at least one
  candidate from $A_0$.  Thus, we have the following algorithm.

\begin{codebox}
\Procname{$\proc{Approval-\#DCAC}(C,A,V,p,k)$}
\li	\If $p$ is not the unique winner of $(C,V)$ \\[-4mm] \label{apacd:checking_c}
\li		\Then \Return $\sum_{i=0}^k\binom{\|A\|}{i}$ \\[-4mm] \label{apacd:returning_result1}
		\End
\li	Let $A_0$ be the set of candidates $a_i \in A$\!,
\zi	\>s.t.\ $\score{(C\cup A,V)}^a(a_i)\ge\score{(C\cup A,V)}^a(p)$. \label{apacd:defining_k0}
\li	$\id{result}:=0$ \label{apacd:result_init}
\li	\For $j:=1$ \To $k$ \label{apacd:counting_loop}
\li		\Do $\id{result}:=\id{result}+\sum_{i=1}^{\min(\|A_0\|,j)}\binom{\|A_0\|}{i}\binom{\|A-A_0\|}{j-i}$ \label{apacd:counting}
		\End
\li	\Return $\id{result}$ \label{apacd:returning_result2}
\end{codebox}

The loop from line~\ref{apacd:counting_loop}, for every $j$, counts
the number of ways in which we can choose exactly $j$ candidates from
$A$; it can be done by first picking $i$ of the candidates in $A_0$
(who beat $p$), and then $j-i$ of the candidates in $A - A_0$.  It is
clear that the algorithm is correct and runs in polynomial time.

Let us now move on to the case of Condorcet voting. It is easy to see
that the same algorithm works correctly, provided that we make two
changes: (a) in the first two lines, instead of testing if $p$ is an
approval winner we need to test if $p$ is a Condorcet winner, and (b)
we redefine the set $A_0$ to be the set of candidates $a_i \in A$ such
that $\N{(C \cup A,V)}(p,a_i) \leq \N{(C \cup A,V)}(a_i,p)$. To see that these two
changes suffice, it is enough to note that to ensure that $p$ is not a
Condorcet winner of the election we have to have that either $p$
already is not a Condorcet winner (and then we can freely add any
number of candidates), or we have to add at least one candidate from
$A_0$.
\end{proof}

We conclude our discussion of Condorcet voting by noting that for the
single-peaked case all our results for Condorcet directly translate to
all Condorcet-consistent rules. The reason for this is that if voters'
preferences are single peaked, then there always exist weak Condorcet
winners. Whenever weak Condorcet winners exist, they are the sole
winners under Condorcet-consistent rules by definition. Since we focus
on the unique-winner model, if $R$ is a Condorcet-consistent rule, $E$
is a single-peaked, and $c$ is a candidate, then $c$ is the unique
$R$-winner of $E$ if and only if $c$ is the unique Condorcet winner of
$E$. In effect, we have the following corollary.

\begin{corollary}\label{cor:condorcet-consistent}
  Let $R$ be a Condrocet-consistent rule. For the single-peaked case,
  $R$-\#CCDC, $R$-\#DCAC, $R$-\#CCAV, $R$-\#CCDV, $R$-\#DCAV, and
  $R$-\#DCDV are in $\fp$.
\end{corollary}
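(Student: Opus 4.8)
The plan is to reduce, for every Condorcet-consistent rule $R$, each of the six listed counting problems to the \emph{same} counting problem for the Condorcet rule, and then to invoke the $\fp$ results we have already proved for Condorcet. The engine of the reduction is the claim stated informally above: on single-peaked profiles the unique $R$-winner coincides with the unique Condorcet winner. I would therefore first isolate this as a lemma: if $E=(C,V)$ is single-peaked and $R$ is Condorcet-consistent, then a candidate $c$ is the unique $R$-winner of $E$ if and only if $c$ is the (strict) Condorcet winner of $E$. One direction is immediate: a strict Condorcet winner is automatically the unique \emph{weak} Condorcet winner, hence---by Condorcet-consistency together with the classical fact that single-peaked profiles always admit weak Condorcet winners---the unique $R$-winner.

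The reverse direction is where the real work lies, and it is the step I expect to be the main obstacle, since a unique $R$-winner is only guaranteed to be a unique \emph{weak} Condorcet winner, and one must upgrade this to a strict Condorcet winner. I would argue this via the median-voter structure. Order the candidates along the axis as $c_1 \spl \cdots \spl c_m$, write $n$ for the number of voters, and let $L_i$ be the number of voters whose top candidate lies in $\{c_1,\ldots,c_i\}$. Single-peakedness forces every such voter to prefer $c_i$ to each $c_j$ with $j>i$; in particular $\N{E}(c_i,c_{i+1}) = L_i$ and $\N{E}(c_{i+1},c_i) = n - L_i$. Thus $c_i$ strictly beats $c_{i+1}$ iff $L_i > n/2$ and ties it iff $L_i = n/2$, and the monotonicity of $i \mapsto L_i$ propagates a strict win over a neighbor into a strict win over \emph{all} candidates on that side. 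The crucial consequence is that an adjacent tie $L_i = n/2$ makes \emph{both} $c_i$ and $c_{i+1}$ weak Condorcet winners. Hence a unique weak Condorcet winner can have no adjacent tie, so it strictly beats both neighbors and, by propagation, every candidate---i.e.\ it is the strict Condorcet winner, completing the lemma.

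With the lemma available, I would next check that each control action preserves single-peakedness with respect to the given axis. Adding or deleting voters leaves a subcollection of a single-peaked collection, which is single-peaked with respect to the same $\spl$; adding or deleting candidates yields votes that are restrictions of the single-peaked votes over $C \cup A$, hence single-peaked with respect to the induced axis. Consequently, for every admissible control action the resulting election is single-peaked, and so by the lemma the event ``$p$ is the unique $R$-winner'' and the event ``$p$ is the unique Condorcet winner'' coincide exactly (and their negations coincide, which covers the destructive variants). Therefore, for each of the six problems, the number of control actions counted for $R$ is identical to the number counted for Condorcet.

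Finally I would invoke the Condorcet easiness results: $R$-\#CCDC and $R$-\#DCAC reduce to Condorcet-\#CCDC and Condorcet-\#DCAC, which are in $\fp$ even in the unrestricted case by Theorems~\ref{th:apdcc} and~\ref{th:apacd}; and $R$-\#CCAV, $R$-\#CCDV, $R$-\#DCAV, $R$-\#DCDV reduce to the corresponding Condorcet voter-control problems, which are in $\fp$ for the single-peaked case by our preceding single-peaked Condorcet theorem (itself relying on Theorems~\ref{thm:c-to-d} and~\ref{thm:d-to-a}). Since the counts coincide, all six $R$-problems lie in $\fp$, establishing the corollary. I would close by noting that the two control types absent from the list, $R$-\#CCAC and $R$-\#DCDC, are precisely those to which Condorcet-consistent rules are immune, matching the dashes in Table~\ref{tab:results}.
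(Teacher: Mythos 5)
Your proof is correct and follows the same overall route as the paper: identify the unique $R$-winner with the unique Condorcet winner on single-peaked profiles, note that every admissible control action preserves single-peakedness, and then reuse the Condorcet results (Theorems~\ref{th:apdcc} and~\ref{th:apacd} for the candidate-control cases and the single-peaked Condorcet voter-control theorem, via Theorems~\ref{thm:c-to-d} and~\ref{thm:d-to-a}, for the rest). The genuine difference is that you prove a step the paper only asserts. The paper's justification is: single-peaked profiles admit weak Condorcet winners, Condorcet-consistency makes the $R$-winners exactly the weak Condorcet winners, hence ``$c$ is the unique $R$-winner iff $c$ is the unique Condorcet winner.'' But the Condorcet rule as defined in the paper (and as used in its counting algorithms, e.g.\ the strict inequalities in the median-voter lemma for Condorcet-\#CCAV) elects the \emph{strict} Condorcet winner, so the quoted equivalence silently upgrades ``unique weak Condorcet winner'' to ``strict Condorcet winner.'' That upgrade fails for unrestricted profiles --- with votes $c>e>d$, $c>e>d$, $d>c>e$, $e>d>c$, candidate $c$ is the unique weak Condorcet winner yet ties $d$ head-to-head --- so it genuinely requires the single-peaked structure. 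Your median-voter argument (an adjacent tie $L_i = n/2$ forces both $c_i$ and $c_{i+1}$ to be weak Condorcet winners, so a unique weak Condorcet winner must strictly beat both neighbours and, by monotonicity of $L_i$, everyone else) closes exactly this gap; your check that adding/deleting voters and candidates preserves single-peakedness with respect to the given (or induced) axis is the other detail the paper leaves implicit. In short: same reduction, but your write-up supplies the lemma that actually makes it work.
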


\subsection{Maximin Voting}
For the case of Maximin, we consider the unrestricted case
only. Maximin is Condorcet-consistent and, thus, for the single-peaked
case we can use Corollary~\ref{cor:condorcet-consistent}.

The complexity of decision variants of control for Maximin was studied
by Faliszewski, Hemaspaandra, and
Hemaspaandra~\shortcite{fal-hem-hem:j:multimode}.  In particular, they
showed that under Maximin all voter control problems are
$\np$-complete and an easy adaptation of their proofs gives the
following theorem.

\begin{theorem}
  Maximin-\#CCAV and Maximin-\#CCDV are $\sharpp$-complete, and
  Maximin-\#DCAV and Maximin-\#DCDV are $\sharpp$-metric-complete.
\end{theorem}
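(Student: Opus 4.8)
The plan is to derive the two constructive results by re-reading the $\np$-hardness reductions of Faliszewski, Hemaspaandra, and Hemaspaandra~\cite{fal-hem-hem:j:multimode} for Maximin voter control and observing that they are in fact parsimonious reductions from \#X3C, and then to obtain the destructive results from Theorem~\ref{thm:c-to-d}. First I would verify membership in $\sharpp$. For Maximin-\#CCAV a nondeterministic polynomial-time machine guesses a subset $W' \subseteq W$ with $\|W'\| \le k$ and accepts exactly when $p$ is the unique Maximin winner of $(C, V \cup W')$; since Maximin winner determination is polynomial-time, the number of accepting computations equals the desired count, so the function lies in $\sharpp$. The identical argument (guessing $V' \subseteq V$, and accepting when $p$ is \emph{not} the unique winner) handles Maximin-\#CCDV and both destructive variants.

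For the constructive hardness, I would take the X3C-based constructions of~\cite{fal-hem-hem:j:multimode} essentially verbatim and upgrade them from the decision level to the counting level. Concretely, given an X3C instance $(B,\calS)$, the reduction builds a Maximin control instance in which $p$'s ability to become the unique winner is governed by $p$'s worst pairwise contest, and each candidate addition (resp.\ deletion) corresponding to a set in $\calS$ improves $p$'s standing against the three elements that set covers. I would then check the two facts that make the reduction parsimonious: every successful control action has size exactly $k$ (so the ``$\le k$'' budget is forced tight), and every successful control action corresponds to a subfamily of $\calS$ that covers each element of $B$ exactly once. This yields a bijection between exact covers and solutions of the control instance, and since \#X3C is $\sharpp$-parsimonious-complete~\cite{hun-mar-rad-ste-:j:planar-counting-problems}, it establishes $\sharpp$-completeness in the (parsimonious) sense used throughout this paper.

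The main obstacle is exactly this \emph{parsimony} check, since the original reductions were only designed so that \emph{some} successful control action exists iff an exact cover exists. For counting we must additionally rule out spurious solutions: additions (deletions) of fewer than $k$ voters, coverings that hit some element of $B$ twice, and alternative winning configurations that do not arise from covers. I expect this to hinge on tuning the registered-voter profile and the auxiliary candidates so that $p$'s Maximin score overtakes every competitor \emph{only} when all of $B$ is covered, while any double-cover or non-cover either wastes budget or leaves some competitor with too high a Maximin score. If the construction of~\cite{fal-hem-hem:j:multimode} leaves slack in any of these respects, a small local modification to the gadget (adjusting the auxiliary candidates or the weighted majority graph induced by $V$) should restore a clean one-to-one correspondence without affecting the decision-level argument.

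Finally, the destructive results follow immediately. By Theorem~\ref{thm:c-to-d}, each constructive problem metric reduces to its destructive counterpart and vice versa. Since Maximin-\#CCAV and Maximin-\#CCDV are $\sharpp$-parsimonious-complete, they are in particular $\sharpp$-metric-hard; composing with the metric reduction of Theorem~\ref{thm:c-to-d} (and using transitivity of metric reductions) shows Maximin-\#DCAV and Maximin-\#DCDV are $\sharpp$-metric-hard, and together with their membership in $\sharpp$ noted above they are $\sharpp$-metric-complete. We obtain only metric-completeness, rather than parsimonious-completeness, for the destructive variants precisely because the reduction of Theorem~\ref{thm:c-to-d} uses the nontrivial post-processing function $g(I,k) = s_I - k$.
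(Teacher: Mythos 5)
Your proposal matches the paper's approach exactly: the paper gives no explicit proof for this theorem, stating only that ``an easy adaptation'' of the Faliszewski--Hemaspaandra--Hemaspaandra X3C-based reductions for Maximin voter control yields the result, which is precisely your plan of verifying that those reductions are parsimonious from \#X3C and then invoking Theorem~\ref{thm:c-to-d} for the destructive cases. Your additional care about membership in $\sharpp$ and about why only metric-completeness (not parsimonious-completeness) transfers to the destructive variants is correct and consistent with how the paper handles the analogous Approval and Condorcet cases.
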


On the other hand, among the candidate control problems for Maximin,
only Maximin-CCAC is $\np$-complete (DCAC, CCDC, and DCDC are in
$\p$). Still, this hardness of control by adding candidates translates
into the hardness of all the counting variants of candidate control.
\begin{theorem}
  Maximin-\#CCAC is $\sharpp$-complete and Maximin-\#DCAC is
  $\sharpp$-metric-complete.
\end{theorem}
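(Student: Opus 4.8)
The plan is to obtain both statements from the $\np$-completeness proof for the decision problem Maximin-CCAC of Faliszewski, Hemaspaandra, and Hemaspaandra~\cite{fal-hem-hem:j:multimode}, exactly in the spirit of our Plurality argument in Theorem~\ref{thm:placdcc}. The destructive half is then free: a parsimonious (hence metric) reduction from \#X3C to Maximin-\#CCAC, composed with the metric reduction from \#CCAC to \#DCAC supplied by Theorem~\ref{thm:c-to-d}, shows by transitivity that every $\sharpp$-problem metric reduces to Maximin-\#DCAC (every $\sharpp$-problem parsimoniously reduces to the $\sharpp$-complete problem \#X3C); membership of Maximin-\#DCAC in $\sharpp$ is routine, so it is $\sharpp$-metric-complete. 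Thus it suffices to establish that Maximin-\#CCAC is $\sharpp$-parsimonious-complete.

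Membership is immediate: a nondeterministic machine guesses a subset $A' \subseteq A$ with $\|A'\| \le k$, computes the Maximin winner of $(C \cup A', V)$ in polynomial time, and accepts exactly when $p$ is the unique winner, so that the number of accepting paths equals the number of solutions. For hardness I would start from the reduction used by Faliszewski et al.\ to prove Maximin-CCAC $\np$-complete, which maps an X3C instance $(B,\mathcal{S})$, with $B = \{b_1,\dots,b_{3k}\}$, to a Maximin-CCAC instance in which the addable candidates correspond one-to-one to the sets $S_j \in \mathcal{S}$, the addition bound is $k$, and $p$ becomes the unique Maximin winner of the enlarged election precisely when the added set-candidates jointly suppress the Maximin scores of all of $p$'s element-rivals $b_i$---which happens exactly when the chosen sets form an exact cover of $B$. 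If this correspondence is a bijection between size-$k$ exact covers of $B$ and valid addition sets $A'$, then the \emph{same} map reduces \#X3C to Maximin-\#CCAC parsimoniously, and $\sharpp$-completeness follows.

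The main obstacle is confirming---or, if needed, enforcing---parsimony. The decision reduction only equates the \emph{existence} of a valid $A'$ with the existence of an exact cover, whereas parsimony requires a one-to-one match. Two spurious phenomena must be excluded: that $p$ could be made the unique winner by adding a collection that does not correspond to a clean choice of sets (e.g.\ adding fewer than $k$ candidates, or adding an addable candidate irrelevant to $p$'s victory, so that several distinct $A'$ induce the same winner), and that a single exact cover might be realized by more than one admissible $A'$. If the original gadget already forces $p$ to win iff exactly $k$ cover-candidates, pairwise ``vertex-disjoint'' in the X3C sense, are added, parsimony is immediate; otherwise I would calibrate the head-to-head margins $N_E$ among $p$, the element-candidates, and the set-candidates so that every non-cover collection and every collection of size other than $k$ strictly fails to make $p$ the unique Maximin winner. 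This tuning of the weighted majority graph is the delicate step, but it follows the same template as the margin calibration in the Condorcet-\#CCAV reduction of Theorem~\ref{thm:approval-condorcet-avdv}, where the thresholds were set so that exactly the exact covers survive.
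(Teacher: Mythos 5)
Your proposal takes essentially the same route as the paper: the paper's proof simply observes that the Faliszewski, Hemaspaandra, and Hemaspaandra reduction from X3C to Maximin-CCAC is already correct as a parsimonious reduction from \#X3C to Maximin-\#CCAC, and obtains the destructive case via Theorem~\ref{thm:c-to-d}, exactly as you outline. Your additional discussion of what could break parsimony is a reasonable caution, but the paper asserts (and relies on the fact) that the original gadget needs no modification.
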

\begin{proof}
  Faliszewski et al.~\cite{fal-hem-hem:j:multimode}'s proof that
  Maximin-CCAC is $\np$-complete can be used without change; their
  reduction from X3C to Maximin-CCAC is also correct as a parsimonious
  reduction from \#X3C to Maximin-\#CCAC. The result for
  Maximin-\#DCAC follow through Theorem~\ref{thm:c-to-d}.
\end{proof}

The cases of Maximin-\#CCDC and Maximin-\#DCDC are more complicated
and require new ideas because decision variants of these problems are
in $\p$.

\begin{theorem}
  Both Maximin-\#CCDC and Maximin-\#DCDC are $\sharpp$-Turing-complete.
\end{theorem}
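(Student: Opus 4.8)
The plan is to first dispatch the easy half and then concentrate all the work on a single Turing reduction. Membership in $\sharpp$ is immediate: a candidate deletion set $C'$ (with $p \not\in C'$, $\|C'\| \le k$, and $p$ the unique Maximin winner of $(C-C',V)$ for Maximin-\#CCDC, or $p$ not the unique winner for Maximin-\#DCDC) can be guessed and verified in polynomial time. For hardness, I would note that by Theorem~\ref{thm:c-to-d} the problems Maximin-\#CCDC and Maximin-\#DCDC metric-reduce to each other, so (since metric reductions are a special case of Turing reductions) it suffices to prove that one of them, say Maximin-\#CCDC, is $\sharpp$-Turing-hard. The decisive structural observation is that the decision version of this problem is in $\p$; consequently a parsimonious (or, more generally, many-one) reduction from an $\np$-hard counting problem such as \#X3C is impossible unless $\p=\np$. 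This is exactly why a genuine Turing reduction is needed, and it points to reducing from \#PerfectMatching, whose decision version (existence of a perfect matching) is in $\p$ while counting is $\sharpp$-hard --- mirroring the reductions already used in the proof of Theorem~\ref{thm:ccavdv-2}.

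For the reduction itself, I would start from a bipartite graph $G=(G(X),G(Y),G(E))$ with $\|G(X)\|=\|G(Y)\|=n$ and build a Maximin election in which the pairwise contests are prescribed directly on the weighted majority graph $\N{E}(\cdot,\cdot)$, realizing the desired margins by the standard McGarvey construction (a polynomial number of voters suffices). The candidates would consist of the designated candidate $p$ together with gadget candidates encoding the vertices and/or edges of $G$, with the margins tuned so that: (i) $p$ has a fixed, moderate Maximin score determined solely by a designated block of ``threat'' candidates, so that raising $p$'s score to the winning threshold forces the deletion of exactly those threats that survive; and (ii) each remaining gadget candidate $c$ keeps a low Maximin score as long as at least one of its designated ``beaters'' survives, so that deleting candidates to help $p$ risks boosting some $c$. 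The margins would be arranged so that the deletion sets for which $p$ becomes the unique winner are in one-to-one correspondence with the matchings of $G$ meeting a prescribed size constraint, a perfect matching being the extremal case in which every vertex gadget is simultaneously satisfied.

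With such a gadget in hand, the final step is extraction. Since Maximin-\#CCDC counts deletion sets of size \emph{at most} $k$, I would make a small number of oracle calls with different bounds $k$ and take differences of the answers --- exactly the device used in Theorem~\ref{thm:ccavdv-2}, where the number of perfect matchings was recovered as the count for bound $n$ minus the count for bound $n-1$. Here one call isolates matchings up to a given size and a second (or a short sequence, resolved by finite differences) peels off the perfect matchings, i.e.\ the matchings of size exactly $n$. This yields the value of \#PerfectMatching for $G$ in polynomial time from oracle access to Maximin-\#CCDC, establishing $\sharpp$-Turing-hardness; the same conclusion for Maximin-\#DCDC then follows from Theorem~\ref{thm:c-to-d}.

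The hard part will be the gadget design of the previous paragraph, specifically taming the global, nonlocal nature of the Maximin score: because a candidate's score is a minimum over \emph{all} surviving opponents, deleting a candidate to raise $p$'s score can simultaneously remove the unique opponent that was keeping some other candidate's score down, inadvertently creating a new candidate that ties or beats $p$. I would control this by letting each gadget candidate's minimum margin come from several redundant beaters, so that its score can be raised only by a deletion pattern corresponding to a genuine violation of the matching/covering constraint, and by isolating $p$'s own worst matchup in a separate block whose deletion does not interact with the scores of the others. Getting these two requirements to hold simultaneously, while keeping all prescribed margins mutually consistent and realizable, is the main technical obstacle; once the gadget is correct, isolating the size-exactly-$n$ matchings from the ``at most $k$'' count is the routine difference argument inherited from Theorem~\ref{thm:ccavdv-2}.
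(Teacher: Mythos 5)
Your overall architecture is exactly the paper's: membership in $\sharpp$ is immediate, the decision problem being in $\p$ rules out parsimonious/many-one hardness (so a Turing reduction is forced), the source problem is \#PerfectMatching, the reduction builds a Maximin election whose valid deletion sets encode matchings, and the number of perfect matchings is extracted from several oracle calls with different deletion bounds. The transfer to Maximin-\#DCDC via Theorem~\ref{thm:c-to-d} is also how the paper concludes. So at the level of strategy there is nothing to object to.

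The problem is that the proof's entire technical content --- the gadget --- is not constructed; you explicitly defer it as ``the main technical obstacle,'' and the difficulty you identify (the nonlocality of the Maximin score under deletion) is precisely what the construction must, and in the paper does, resolve. Concretely, the paper uses pairs of votes $v(a,b)$ (a McGarvey-style device, as you anticipated) to build an election on $G(E) \cup S \cup B \cup \{p\}$ where: (i) $S = \{s_0,\ldots,s_n\}$ is a cycle of $n{+}1$ candidates each of which realizes $p$'s minimum, so that $p$'s score $T{-}1$ cannot be changed within the budget $n$, and deleting any $s_i$ boosts $s_{i+1}$ above $p$, forbidding deletions inside $S$; (ii) for each pair of intersecting edges $e_i,e_j$ there are candidates $b_{i,j}^\ell$, $0 \le \ell \le n$, whose score jumps from $T{-}2$ to $T$ only if \emph{both} $e_i$ and $e_j$ are deleted --- the ``redundant beaters'' you gestured at, with the redundancy in $\ell$ ensuring the constraint cannot be evaded by deleting the $b_{i,j}^\ell$ themselves. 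This forces the deleted edge-candidates to form a matching. A consequence you did not anticipate is that the $B$-candidates may be deleted freely, so the oracle answers are not simply ``number of matchings of size $\le k$'': writing $f(k)$ for the number of exact-size-$k$ solutions, one gets $f(k) = \sum_{j=0}^{k}\binom{\|B\|}{j}g(k-j)$ where $g$ counts matchings by size, and recovering $g(n)$ requires $n{+}1$ oracle calls and inverting a triangular system, not the single difference of Theorem~\ref{thm:ccavdv-2}. None of this is unfixable, but as submitted the proposal is a plan for a proof rather than a proof: the statement whose verification carries all the weight --- that deletion sets making $p$ the unique winner are exactly (matching) $\cup$ (arbitrary subset of $B$) --- is never established.
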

\begin{proof}
  We consider the \#CCDC case first. Clearly, the
  problem belongs to $\sharpp$ and it remains to show hardness. We
  will do so by giving a Turing reduction from \#PerfectMatching.

  Let $G = (G(X),G(Y),G(E))$ be our input graph, where $G(X) = \{x_1,
  \ldots, x_n\}$ and $G(Y) = \{y_1, \ldots, y_n\}$ are sets of
  vertices, and $E = \{e_1, \ldots, e_m\}$ is the set of edges.  For
  each nonnegative integer $k$, define $g(k)$ to be the number of
  matchings in $G$ that contain exactly $k$ edges (e.g., $g(n)$ is the
  number of perfect matchings in $G$).

  We define the following election $E = (C,V)$.  We set $C = G(E) \cup
  S \cup B \cup \{p\}$, where $S = \{s_0, \ldots, s_{n}\}$ and $B =
  \{b_{i,j}^\ell \mid \text{$0 \leq \ell \leq n$, $i < j$, and $e_i$
    and $e_j$ share a vertex}\}$. To build voter collection $V$, for
  each two candidates $a, b \in C$, we define $v(a,b)$ to be a pair of
  voters with preference orders $a > b > C - \{a,b\}$ and
  $\revnot{C-\{a,b\}} > a > b$. We construct $V$ as follows:
  \begin{enumerate}
\fixenumerate
  \item For each $s_i \in S$, we add pair $v(s_i,p)$.
  \item For each $s_i \in S$, we add two pairs $v(s_i,s_{i+1})$,
    where $i+1$ is taken modulo $n+1$.
  \item For each $s_i \in S$ and each $e_t \in E$, we add two pairs
    $v(s_i,e_t)$.
  \item For each $e_i, e_j \in E$, $i < j$, where $e_i$ and $e_j$
    share a vertex, and for each $\ell$, $0 \leq \ell \leq n$, we add
    two pairs $v(e_i,b_{i,j}^\ell)$ and two pairs
    $v(e_j,b_{i,j}^\ell)$.
  \end{enumerate}
  Let $T$ be the total number of pairs $v(a,b)$, $a,b \in C$, included
  in $V$. By our construction, the following properties hold:
  \begin{enumerate}
\fixenumerate
  \item $\score{E}^m(p) = T-1$ and it is impossible to change the
    score of $p$ be deleting $n$ candidates or fewer (this is because
    there are $n+1$ candidates $s_i \in S$ such that $N_E(p,s_i) =
    T-1$.
  \item For each $s_i \in S$, $\score{E}^m(s_i) = T-2$, but deleting
    $s_{i-1}$ (where we take $i-1$ modulo $n+1$) increases the score
    of $s_i$ to $T+1$.
  \item For each $e_t \in E$, $\score{E}^m(e_t)$ is $T-2$.
  \item For each $b_{i,j}^\ell \in B$, $\score{E}^m(b_{i,j}^\ell) = T-2$ and
    it remains $T-2$ if we delete either $e_i$ or $e_j$, but it
    becomes $T$ if we delete both $e_i$ and $e_j$.
  \end{enumerate}
  Note that $p$ is the unique winner of $E$.  For each $k$, $0 \leq k
  \leq n$, we form instance $I(k) = (C,V,p,k)$ of Maximin-\#CCDC. We
  define $f(k) = \#I(k) - \#I(k-1)$. That is, $f(k)$ is the number of
  solutions for $I(k)$ where we delete exactly $k$ candidates. We
  claim that for each $k$, $1 \leq k \leq n$, it holds that $ f(k) =
  \sum_{j=0}^k {\|B\| \choose j} g(k-j).  $ Why is this so? First,
  note that by the listed-above properties of $E$, deleting any subset
  $C'$ of candidates from $C$ that contains some member of $S$
  prevents $p$ from being a winner. Thus, we can only delete subsets
  $C'$ of $C$ that contains candidates in $G(E) \cup B$. Let us fix a
  nonnegative integer $r$, $0 \leq r \leq n$. Let $C' \subseteq G(E)
  \cup B$ be such that $p$ is the unique Maximin winner of $E' =
  (C-C',V)$ and $\|C'\| = r$.  Let $r_B = \|C' \cap B\|$ and $r_{G(E)}
  = \|C' \cap G(E)\|$.  It must be the case that for each $e_i, e_j
  \in G(E)$, $i < j$, where $e_i$ and $e_j$ share a vertex, $C'$
  contains at most one of them. Otherwise, $E'$ would contain at least
  one of the candidates $b_{i,j}^\ell$, $0 \leq \ell \leq n$, and this
  candidate would have score higher than $p$. Thus, the candidates in
  $C' \cap G(E)$ correspond to a matching in $G$ of cardinality
  $r_{G(E)}$.  On the other hand, since $r_B \leq n$, $C \cap B$
  contains an arbitrary subset of $B$. Thus, there are exactly ${\|B\|
    \choose r_B} g(r_{G(E)})$ such sets $C'$. Our formula for
  $f(k)$ is correct.

  Now, using standard algebra (a process similar to Gauss
  elimination), it is easy to verify that given values $f(1), f(2),
  \ldots, f(n)$, it is possible to compute (in this order) $g(0),
  g(1), \ldots, g(n)$. Together with the fact that constructing each
  $I(k)$, $0 \leq k \leq n$, requires polynomial time with respect to
  the size of $G$, this proves that given oracle access to
  Maximin-\#CCDC, we can solve \#PerfectMatching. Thus, Maximin-\#CCDC
  is $\sharpp$-Turing-complete and, by Theorem~\ref{thm:c-to-d}, so is
  Maximin-\#DCDC.
\end{proof}

\section{Related Work}
\label{sec:related}

The focus of this paper is on the complexity of predicting election
winners for the case, where we are uncertain about the structure of
the election (the exact identities of candidates/voters that
participate), yet we have perfect knowledge of voters' preference
orders. However, our model is just one of many approaches to winner
prediction, which in various forms and shapes has been studied in the
literature for some years already.  For example, to model imperfect
knowledge regarding voters' preferences, Konczak and
Lang~\shortcite{kon-lan:c:incomplete-prefs} introduced the possible
winner problem, further studied by many other researchers (see, e.g.,
\cite{con-xia:j:possible-necessary-winners,bet-dor:j:possible-winner-dichotomy,bac-bet-fal:c:counting-pos-win,che-lan-mau-mon:c:possible-winners-adding,lan-mon-xia:c:new-alternatives-new-results}). In
the possible winner problem, each voter is represented via a partial
preference order and we ask if there is an extension of these partial
orders to total orders that ensures a given candidate's victory.
Bachrach, Betzler, and
Faliszewski~\shortcite{bac-bet-fal:c:counting-pos-win} extended the
model by considering counting variants of possible winner problems.
Namely, they asked for how many extensions of the votes a given
candidate wins, in effect obtaining the probability of the candidate's
victory. This is very similar to our approach, but there are also
important differences. In the work of Bachrach et al., we have full
knowledge regarding the identities of candidates and voters
participating in the election, but we are uncertain about voters'
preference orders. In our setting, we have full knowledge about
voters' preference orders, but we are uncertain about the identities
of candidates/voters participating in the election.

Another model of predicting election outcomes is that of Hazon et
al.~\cite{haz-aum-kra-woo:j:uncertain-election-outcomes}.  They
consider a situation where each voter is undecided regarding several
possible votes. That is, for each voter we are given several possible
preference orders and a probability distribution over these votes.
The question is, what is the probability that a designated candidate
wins.

From a technical standpoint, our research continues the line of work
on the complexity of control. This line of work was initiated by
Bartholdi, Tovey, and Trick~\cite{bar-tov-tri:j:control}, and then
continued by Hemaspaandra, Hemaspaandra, and
Rothe~\shortcite{hem-hem-rot:j:destructive-control} (who introduced
the destructive cases), by Meir et
al.~\shortcite{mei-pro-ros-zoh:j:multiwinner} (who considered
multiwinner rules and who generalized the idea of the constructive and
destructive cases), by Faliszewski, Hemaspaandra, and
Hemaspaandra~\shortcite{fal-hem-hem:j:multimode} (who introduced
multimode model of control), by Faliszewski, Hemaspaandra, and
Hemaspaandra~\cite{fal-hem-hem:c:weighted-control} (who were first to
consider control for weighted elections), by Rothe and
Schend~\cite{rot-sch:c:experimental-control} (who initiated the
empirical study of the complexity of control problems), and by many
other researchers, who provided results for specific voting rules and
who introduced various other novel means of studying control problems
(see, e.g., the following
papers~\cite{bet-uhl:j:parameterized-complecity-candidate-control,erd-now-rot:j:sp-av,erd-rot:c:fallback-voting,fen-liu-lua-zhu:j:parameterized-control,liu-zhu:j:maximin,men-sin:c:schultze-control,par-xia:strategic-schultze-ranked-pairs};
we also point the readers to the
survey~\cite{fal-hem-hem:j:cacm-survey}).

Single-peaked elections were studied for a long time in social choice
literature, but they gained popularity in the computational social
choice world fairly recently, mostly due to the papers of
Walsh~\cite{wal:c:uncertainty-in-preference-elicitation-aggregation}
and Conitzer~\cite{con:j:eliciting-singlepeaked}. Then, Faliszewski et
al.~\cite{fal-hem-hem-rot:j:single-peaked-preferences} and, later,
Brandt et al.~\cite{bra-bri-hem-hem:c:sp2} studied the complexity of
control problems for single-peaked elections. Recently, Faliszewski,
Hemaspaandra, and Hemaspaandra~\cite{fal-hem-hem:j:nearly-sp}
complemented this line of work by studying nearly single-peaked
profiles.

Going in a different direction, our work is very closely related to
the paper of Walsh and Xia~\shortcite{wal-xia:c:lot-based} on
lot-based elections. Walsh and Xia study a model of Venetian
elections, where a group of voters of a given size is randomly
selected from a group of eligible voters, and the votes are collected
from these selected voters only. In this setting, the problem of
computing a candidate's chances of victory, in essence, boils down to
the counting variant of control by adding voters problem. Thus, our
paper and that of Walsh and Xia are quite similar on the technical
front. The papers, however, have no overlap in terms of results.

\section{Conclusions and Future Work}\label{sec:conclusions}
We have considered a model of predicting election winners in
settings where there is uncertainty regarding the structure of the
election (that is, regarding the exact set of candidates and the exact
collection of voters participating in the election).  We have shown
that our model corresponds to the counting variants of election
control problems (specifically, we have focused on election control by
adding/deleting candidates and voters). We have considered Plurality,
Approval, Condorcet, $k$-Approval, and Maximin (see
Table~\ref{tab:results} for our results).  For the former three, the
complexity of counting variants of control is analogous to the
complexity of decision variants of respective problems, but for the
latter two, some of the counting control problems are more
computationally demanding than their decision counterparts.

Many of our results indicate computational hardness of winner
prediction problems.  To alleviate this issue to some extent, we also
considered single-peaked preferences that are more likely to appear in
practice. In this case, we got polynomial-time results only (except
the case of Approval, where have no results for the single-peaked
case). Still, sometimes in practice one might have to seek heuristic
algorithms or approximate solutions (e.g., sampling-based algorithms
similar to the one of Bachrach, Betzler, and
Faliszewski~\cite[Theorem~6]{bac-bet-fal:c:counting-pos-win}).

There are many ways to extend our work. For example, in the
intruduction we mentioned the model where for each voter $v$ (or
candidate $c$) we have probability $p_v$ (probability $p_c$) that this
voter (candidate) participates in the election. We believe that
studying this problem in more detail would be very interesting.  As we
have argued, the model where all values $p_v$ ($p_c$) are identical,
reduces to our setting, but the cases where the probabilities can
differ remain open.

\medskip
\noindent\textbf{Acknowledgements.} 
An early version of this paper appeared in IJCAI Workshop on Social
Choice and Artificial Intelligence (2011) and an extended abstract was
presented at the Twenty-Sixth AAAI Conference on Artificial
Intelligence (AAAI-12).  We thank both the AAAI and WSCAI reviewers
for very helpful feedback.  Piotr Faliszewski was in part supported by
AGH University of Technology Grant 11.11.230.015 (statutory project),
by Foundation for Polish Science's program Homing/Powroty, and by
Poland's National Science Center's grant 2012/06/M/ST1/00358.

\label{sec:summary}

\bibliographystyle{plain}
\bibliography{grypiotr2006}

\end{document}